\newcommand{\norm}[1]{  \lVert #1 \rVert}
\newcommand{\floor}[1]{  \lfloor #1 \rfloor}
\newcommand{\ceil}[1]{  \lceil #1 \rceil}
\newcommand{\ZZ}{\Z_{\geq 0}}
\newcommand{\DC}{\mathcal{C}}
\newcommand{\lelex}{\prec_\text{lex}}
\newcommand{\bigzero}{\mbox{\normalfont\Large\bfseries 0}}
\newcommand{\rvline}{\hspace*{-\arraycolsep}\vline\hspace*{-\arraycolsep}}
\numberwithin{theorem}{section}
\numberwithin{lemma}{section}
\numberwithin{claim}{section}
\numberwithin{corollary}{section}
\numberwithin{definition}{section}
\numberwithin{observation}{section}
\numberwithin{remark}{section}
\newenvironment{claimproof}[1]{\par\noindent Proof:\space#1}{\hfill $\blacksquare$}
\title{ 
    Parameterized Algorithms on Integer Sets with Small Doubling: Integer Programming, Subset Sum and $k$-SUM
}
\author{
 Tim Randolph\\ 
 {Harvey Mudd College}\\
 \href{mailto:trandolph@hmc.edu}{\texttt{\small trandolph@hmc.edu}}
 \and 
 Karol Węgrzycki\\
 {Saarland University; Max Planck Institute for Informatics}\\
 \href{mailto:wegrzycki@cs.uni-saarland.de}{\texttt{\small wegrzycki@cs.uni-saarland.de}}
 \vspace{0.5em}
}
\date{}
\begin{document}

\maketitle
\thispagestyle{empty}

\begin{abstract}
    We study the parameterized complexity of algorithmic problems whose input is an integer set $A$
    in terms of the \emph{doubling constant} $\DC \coloneqq |A+A| / |A|$, a fundamental measure of additive structure. We present evidence that this new parameterization is algorithmically useful in the form of new results for two difficult, well-studied problems: Integer Programming and Subset Sum.

    First, we show that determining the feasibility of bounded Integer Programs is a tractable problem when parameterized in the doubling constant. Specifically, we prove that the feasibility of an integer program $\mathcal{I}$ with $n$ polynomially-bounded variables and $m$ constraints can be determined in time $n^{O_\DC(1)} \cdot \poly(|\mathcal{I}|)$ when the column set of the constraint matrix has doubling constant $\DC$.
    
    Second, we show that the Subset Sum and Unbounded Subset Sum problems can be solved in time $n^{O_C(1)}$ and $n^{O_\DC(\log\log\log n)}$, respectively, where the $O_C$ notation hides functions that depend only on the doubling constant $\DC$. We also show the equivalence of achieving an \FPT~algorithm for Subset Sum with bounded doubling and achieving a milestone result for the parameterized complexity of Box ILP. Finally, we design near-linear time algorithms for $k$-SUM as well as tight lower bounds for $4$-SUM and nearly tight lower bounds for $k$-SUM, under the $k$-SUM conjecture.

    Several of our results rely on a new proof that Freiman's Theorem, a central result in additive combinatorics, can be made efficiently constructive. This result may be of independent interest.
\end{abstract}

\clearpage
\setcounter{page}{1}
\section{Introduction}
\label{sec:intro}

Given a subset $X$ of a group, the \emph{doubling constant}
\begin{equation*}
    \DC \coloneqq \DC(X) = \frac{|X+X|}{|X|}
    \label{eq:DC}
\end{equation*}
is one measure used to capture the amount of ``additive structure'' in $X$. (Here, $X + X$ denotes the \emph{sumset} $\{a + b \; : \; a, b \in X\}$.) This value ranges (on integer sets) from $2 - o_n(1)$ for arithmetic progressions to $\frac{n}{2} + o_n(1)$ when all sums are distinct, and is central to the study of additive combinatorics. 
If the doubling constant $\DC$ is truly constant (that is, independent of the
set cardinality $|X|$), this indicates that $X$ is a ``highly structured'' set with respect to addition: for example, the statements that 
\begin{itemize}
        \setlength\itemsep{0.3em}
    \item $X$ has constant doubling ($|X+X| \leq c_1 |X|$), that
    \item the iterated sumset
    $sX \coloneqq \underbrace{X + X + X \dots + X}_{s \text{ times}}$
    is at most $c_2 \coloneqq c_2(s)$ times $|X|$, and that
    \item $X$ (likewise $X+X$ and $sX$) can be contained in a \emph{generalized arithmetic progression} of dimension $c_3$ and volume $c_4 |X|$,
\end{itemize}
are all equivalent up to the choice of constants $c_1$, $c_2$, $c_3$, and $c_4$ (c.f. \cite{tao2006additive} Proposition 2.26). The many fruitful applications of the doubling constant illustrate its significance as a robust measurement of additive structure (for an overview, see \cite{tao2006additive}, especially Chapter 2).


In this work, we consider the parameterized complexity of problems on integer sets with respect to the doubling constant. Specifically, we focus on two problems for which additive structure is particularly helpful: Integer Programming and Subset Sum.

\subsection{ Related Work }

Integer Programming and Subset Sum are not only problems in which additive structure plays an important role: they are also both well-studied and stubbornly difficult, to the point where significant work has gone into analysing their parameterized complexity and demarcating classes of tractable instances.

\subparagraph*{Integer Linear Programming.}
Many problems in \emph{combinatorial optimization} can be formulated as an \emph{integer linear program} (ILP). An ILP is an optimization problem of the
following form:

\begin{displaymath}
    \max \left\{ c^T x \mid A x = b, x \in \ZZ^n \right\},
\end{displaymath}

where $A \in \Z^{m \times n}$, $c \in \Z^n$ and $b \in \Z^m$. (ILPs of the form $Ax \geq b$ can be converted to this form using slack variables.) Unlike linear programming, integer programming is \textsf{NP}-complete.  Due
to its generality and both practical and theoretical importance, the
complexity of ILP has been rigorously studied through the lens of \emph{parameterized complexity}. Lenstra~\cite{lenstra83} has shown that an
integer linear program with a fixed number of variables can be solved in
polynomial time. His algorithm was subsequently improved, and the current record
is $(\log{n})^{O(n)}$~\cite{ILP-rothvoss}. The question of whether this can be brought
down to $2^{O(n)}$ is one of the most prominent open questions in the theory of
algorithms.

ILP can also be parameterized in the number of constraints $m$ and the maximum
absolute value of any coefficient in the constraint matrix, $\Delta \coloneqq \norm{A}_\infty$.
In 1981, Papadimitriou~\cite{Papadimitriou81} presented an $(m\Delta)^{O(m^2)}$-time algorithm, and the best algorithms for ILP parameterized in $m$ and $\Delta$ continue to improve: see~\cite{jansen2018integer,eisenbrand2019proximity} for recent progress.
Another class of tractable instances of ILP rely on structural properties of the constraint matrix (see~\cite{cslovjecsek2024parameterized,n-fold,DBLP:conf/soda/CslovjecsekEHRW21,DBLP:conf/esa/CslovjecsekEPVW21}).

\noindent
\subparagraph*{ Subset Sum. }
Along with the closely related Knapsack problem, the Subset Sum problem is the canonical \textsf{NP}-complete problem concerning addition in integer sets.  In
addition to \textsf{NP}-completeness, the problem appears difficult from the
standpoint of exact algorithms: despite significant attention (see, e.g.,
\cite{woeginger2008open,austrin2016dense,NederlofW21}), solving Subset Sum in
time $2^{(1/2 - c)n}$ for some constant $c>0$ remains a major open problem.
Except for ``log shaving'' results that improve runtime by subexponential factors
\cite{chen2023logshavingSS}, the exact runtime has not been improved in 50
years~\cite{horowitz1974computing}. The lack of progress in exact algorithms
motivates parameterized approaches, including a long line of pseudopolynomial-time
algorithms parameterized by the size of
the target~\cite{bringmann2017near,koiliaris2019faster,abboud2022seth} and the
largest input integer
\cite{eisenbrand2019proximity,dense-subset-sum,icalp21,bounded-subset-sum}.

However, these parameterized results do not take advantage of structural
properties of the
input when the input numbers are very large. Therefore, we complement the
parameterization based on the input size by considering the parameterized complexity of Subset Sum with
respect to the doubling constant. This choice is natural not only because the
doubling constant is essential to the study of integer sets under addition, but
also because existing results from additive combinatorics give strong bounds on
the search space: Freiman's Theorem bounds the number of distinct subset sums of
an $n$-element input set by $n^{f(\DC)}$, where $f$ is a function that
depends only on $\DC$. 

The parameterization of Subset Sum in the cardinality of the solution $k$, otherwise known as $k$-SUM, has an entire literature of its own. Simple ``meet-in-the-middle'' algorithms that run in time $O(n^{\lceil k/2 \rceil})$ are conjectured to be optimal up to polynomial factors. The results of Abboud, Bringmann, and Fischer, and of Jin and Xu, suggest that the hardest instances of $k$-SUM are those with very little additive structure, such as Sidon sets \cite{abboud2022stronger,jin2022removing}. Parameterizing $k$-SUM in the doubling constant allows us to make analogous conclusions for the more general case of $k$-SUM: we can now prove results of the form, ``$k$-SUM instances with strong additive structure (i.e., small doubling constant) are easy''. 


\noindent
\subparagraph*{ Algorithms and Additive Combinatorics.}
This paper is also motivated by an emerging trend in fine-grained complexity and algorithms: ``importing'' results from additive combinatorics. In several recent works, researchers have achieved breakthroughs by taking existential results from the field of additive combinatorics and modifying their proofs to make them explicitly and efficiently constructive. 

For example, in 2015 Chan and Lewenstein proved a version of the Balog-Szemeredi-Gowers (BSG) theorem that allows certain sets guaranteed by the theorem to be constructed algorithmically \cite{chan2015clustered}. They then leveraged this result to solve the $(\min, +)$-convolution and 3-SUM problems on monotone sets of small integers. Recently, the  constructive BSG theorem found new applications. In 2022, Abboud, Bringmann and Fischer used this result, as well as a constructive version of Ruzsa's covering lemma, as a key ingredient in their proofs of lower bounds for approximate distance oracles and listing 4-cycles \cite{abboud2022stronger}. In the same year, Jin and Xu independently proved similar lower bounds and used the constructive BSG theorem to reduce 3-SUM to 3-SUM on Sidon sets \cite{jin2022removing}. More broadly, these works reflect the increasing role of additive combinatorics in algorithms over the last few decades; for general references, see \cite{Trevisan09,Viola11ac,Bibak13,Lovett17actcs}.

\subsection{Our Results}

\subparagraph{Contribution 1: \emph{A Constructive Freiman's Theorem in Near-Linear FPT Time.}} We begin
by unlocking a new tool to help us manipulate sets with significant
additive structure. \emph{Freiman's Theorem}, a cornerstone result in additive
combinatorics, states that every integer set with constant doubling is contained
inside a small (generalized) arithmetic progression. Naively constructing this generalized arithmetic progression takes \XP-time. We make the construction efficient by showing how an algorithm can obtain such an
arithmetic progression in time $\tilde{O}_\DC(n)$\footnote{ We use $O_\DC$ notation to
indicate the suppression of terms that depend only on $\DC$. For example,
$O_\DC(n^2) = f(\DC) \cdot O(n^2)$ for some computable function $f$. $\tilde{O}$ hides factors polylogarithmic in the argument, in this case $\log(n)$.} (\Cref{thm:Freiman-constructive}).
Later in the paper, we use this theorem to reduce Subset Sum with constant doubling to a constrained integer programming problem (Contribution 3) and to design efficient algorithms for Unbounded Subset Sum (Contribution 4). We hope that, like the
constructive BSG theorem discussed above, the constructive statement of
Freiman's Theorem may find other independent applications. 

\subparagraph*{Contribution 2: \emph{Integer Programming with Constant Doubling.}} An integer program specified by a constraint matrix $A \in \Z^{m \times n}$ and solution vector $b \in \Z^m$ is \emph{feasible} if there exists a solution $x \in \Z^n_{\geq 0}$ such that $Ax = b$. The ILP is \emph{binary} if the variables are further restricted to $x \in \{0,1\}^n$.

In our setting, we consider integer programs in which the set of column vectors
\[
    \mathcal{A} \coloneqq \{A[\cdot, j] \; \mid \; j \in [n]\}
\]
has constant doubling: $|\mathcal{A} + \mathcal{A}| \leq \DC |\mathcal{A}|$, for a constant $\DC$. We prove the following:
\begin{restatable}{theorem}{thmDCBILP}
    \label{thm:DC-BILP}
    An instance $\mathcal{I}$ of $\DC$-Binary ILP Feasibility on $n$ variables can be solved in time $n^{O_\DC(1)} \cdot \poly(|\mathcal{I}|)$.\footnote{ We write $|\mathcal{I}|$ to denote the size of the ILP instance $\mathcal{I}$. In the word RAM model (see \Cref{sec:prelims}), this is $\poly(m, n)$. }
\end{restatable}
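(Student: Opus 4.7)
The plan is to apply the constructive Freiman's theorem to the column set $\mathcal{A}$, rewrite the feasibility problem in terms of GAP coordinates so that only $O_\DC(1)$ effective constraints remain, and then invoke classical ILP algorithms parameterized by the number of constraints.

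First, I would apply Theorem~\ref{thm:Freiman-constructive} to $\mathcal{A}$ to obtain, in time $\tilde{O}_\DC(n)$, a generalized arithmetic progression
\[
    P = \left\{v_0 + \sum_{i=1}^d \lambda_i v_i \;:\; 0 \le \lambda_i < L_i\right\}
\]
of dimension $d = O_\DC(1)$ and volume $\prod_i L_i = O_\DC(n)$ with $\mathcal{A} \subseteq P$, along with a coefficient vector $\lambda^{(j)} \in \Z^d$ for each column so that $A[\cdot, j] = v_0 + \sum_i \lambda^{(j)}_i v_i$. Next, I would group columns by coefficient vector, letting $n_\lambda$ denote the multiplicity of $\lambda$. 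Since permuting columns that share a coefficient vector preserves binary feasibility, the original instance is feasible if and only if there exist integer counts $z_\lambda \in \{0, 1, \ldots, n_\lambda\}$ satisfying $\sum_\lambda z_\lambda (v_0 + \sum_i \lambda_i v_i) = b$.

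Second, I would introduce projection variables $y_0 := \sum_\lambda z_\lambda$ and $y_i := \sum_\lambda z_\lambda \lambda_i$, so that the ambient constraint becomes $y_0 v_0 + \sum_i y_i v_i = b$. Since $y_0 \in [0, n]$ and $y_i \in [0, n L_i]$, there are at most $(n+1) \prod_i (n L_i + 1) = n^{O_\DC(1)}$ candidate tuples $y = (y_0, \ldots, y_d)$; I would enumerate them and discard those failing the ambient equation. For each surviving $y$, the remaining task is to decide whether the bounded ILP $\sum_\lambda z_\lambda = y_0$, $\sum_\lambda z_\lambda \lambda_i = y_i$, $0 \le z_\lambda \le n_\lambda$ is feasible. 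This reduced ILP has $O_\DC(n)$ variables, $d+1 = O_\DC(1)$ constraints, and coefficients of magnitude at most $\max_i L_i = O_\DC(n)$, so results for ILP parameterized by constraint count (e.g.\ Papadimitriou's $(m\Delta)^{O(m^2)}$ bound~\cite{Papadimitriou81}, or the sharper bounds of~\cite{eisenbrand2019proximity,jansen2018integer}) decide it in $n^{O_\DC(1)}$ time. Multiplying across candidate tuples gives total running time $n^{O_\DC(1)} \cdot \poly(|\mathcal{I}|)$.

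The main obstacle is collapsing the $m$ ambient constraints into $O_\DC(1)$ effective ones, since direct enumeration of right-hand sides in the ambient $\Z^m$ is hopeless. The constructive Freiman's theorem is precisely tailored for this: every column is captured by $d+1 = O_\DC(1)$ GAP coordinates, which compresses the right-hand side to $(d+1)$-dimensional data and makes the reduced problem fall squarely within the Papadimitriou regime. A minor technical nuance is that Freiman's theorem may return a non-proper GAP, in which case coefficient vectors need not be unique; any fixed choice of representative suffices, as the volume bound still controls the enumeration and the grouping into multiplicities $n_\lambda$ remains well defined.
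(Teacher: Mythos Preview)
Your approach works in outline but takes a longer route than the paper and has one genuine technical gap.

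The paper's proof never constructs the GAP. It uses only the \emph{existential} Freiman theorem for torsion-free abelian groups (hence directly valid in $\Z^m$) to bound the set $L = \{Ax : x \in \{0,1\}^n\}$: since $\mathcal{A}$ lies in a GAP of dimension $d(\DC)$ and volume $v(\DC)n$, the set $L$ lies in the $n$-fold dilate of that GAP, so $|L| \le n^{d(\DC)} v(\DC) n = n^{O_\DC(1)}$. A one-line dynamic program, $L_i = L_{i-1} \cup (L_{i-1} + A[\cdot,i])$, then enumerates $L$ in time $n\cdot|L| = n^{O_\DC(1)}$ and checks whether $b \in L$. No coordinate change, no enumeration of projection tuples, no Papadimitriou-style subroutine, and the algorithm is deterministic. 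The paper even remarks explicitly that the constructive theorem is not needed here; it is only invoked later, in \Cref{lem:SS-to-HBILP} and \Cref{thm:dc-uss-xp}.

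The gap in your version is that Theorem~\ref{thm:Freiman-constructive}, as stated and proved in the paper, applies only to finite subsets of $\Z$, not of $\Z^m$, so you cannot invoke it directly on the column set $\mathcal{A}$. To repair this you would need either to extend the constructive proof to torsion-free groups (not done in the paper) or to first pass through a base-$M$ embedding $\Z^m \hookrightarrow \Z$ that is a Freiman $n$-isomorphism on the relevant sums; the latter is routine but produces integers occupying $\Theta(m)$ machine words, which must be carried through the $\poly(|\mathcal{I}|)$ factor and through the running-time analysis of Theorem~\ref{thm:Freiman-constructive}. Your algorithm also inherits the randomization of Theorem~\ref{thm:Freiman-constructive}, whereas the paper's DP is deterministic. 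A smaller point: since the columns of $A$ are assumed distinct, two columns cannot share the same chosen coefficient vector $\lambda$, so your multiplicities satisfy $n_\lambda \in \{0,1\}$ and the grouping step is vacuous; this does not affect correctness but simplifies the reduced ILP to a plain binary ILP with $d+1$ constraints, which is already handled by the same DP the paper uses.
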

This follows from Freiman's Theorem (without construction) and a dynamic
programming algorithm.
The theorem also holds when the variables $x_1, x_2, \dots, x_n$ have upper and lower bounds of magnitude $\poly(n)$. 

\subparagraph*{Contribution 3: \emph{Subset Sum with Constant Doubling.} } 
Our result for integer programming with constant doubling implies an
$n^{O_\DC(1)}$-algorithm for Subset Sum (\Cref{cor:dc-ss-xp}). 

Assuming the Exponential Time Hypothesis (ETH), there is no $2^{o(n)}$ time
algorithm for Subset Sum. Because $\DC = O(n)$, this means that we cannot hope
for a $2^{o(\DC)} n^{o(\DC /\log(\DC))}$ algorithm for $\DC$-Subset Sum under
the ETH. However, this lower bound does not exclude an $2^{O(\DC)} \cdot
n^{O(1)}$ algorithm.  A natural question is thus whether our upper bound can be
improved to an Fixed-Parameter Tractable (\textsf{FPT}) result: can $\DC$-Subset
Sum be solved in time $O_\DC(\poly(n))$? We show that this result appears
unlikely by way of an interesting connection to the feasibility of integer
programs with binary variables.


\begin{restatable}{theorem}{GBILequivalence}
    \label{thm:HBILP-SS-equivalence}
    $\DC$-Subset Sum can be solved in time $O_\DC(\poly(n))$ if and only if \emph{Hyperplane-Constrained Binary ILP} (HBILP) can be solved in time $\Delta^{O(m)} \cdot
    O_m(\poly(|\mathcal{I}|))$, where $|\mathcal{I}|$ is the size of the instance.
\end{restatable}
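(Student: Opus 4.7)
The plan is to establish the equivalence via two parameter-preserving reductions, both relying on Freiman's Theorem---the forward direction uses only the non-constructive bound that restricts sumsets inside boxes, while the converse invokes our constructive version (Contribution 1) to locate the ambient generalized arithmetic progression.

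For the direction ``$O_\DC(\poly(n))$ for $\DC$-SS $\Rightarrow$ $\Delta^{O(m)} \cdot O_m(\poly(|\mathcal{I}|))$ for HBILP'', I would apply the classical column-encoding reduction. Given an HBILP instance with matrix $A \in \mathbb{Z}^{m \times n}$ of maximum entry $\Delta$ and target $b$, set $M \coloneqq 2n\Delta + 1$ and define $c_j = \sum_{i=1}^m A_{ij} M^{i-1}$ and $t = \sum_i b_i M^{i-1}$. The value of $M$ exceeds the maximum carry between coordinates in any subset sum of at most $n$ columns, so the encoding is a Freiman isomorphism of order $n$ between the columns of $A$ (as vectors of $\{-\Delta,\ldots,\Delta\}^m$) and the integers $c_j$; hence the HBILP is feasible iff some subset of $\{c_j\}$ sums to $t$. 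Under this isomorphism, $\{c_j\}$ lies in an $m$-dimensional grid of side $2\Delta+1$, so $|\{c_j\} + \{c_j\}| \le (4\Delta+1)^m$ and $\DC(\{c_j\}) \le (4\Delta+1)^m = \Delta^{O(m)}$. Feeding this instance into the hypothetical $\DC$-SS algorithm yields the desired HBILP runtime.

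For the converse, I would use the constructive Freiman's Theorem to encode a $\DC$-SS instance as an HBILP. Compute a proper GAP $P = \{c + \sum_{i=1}^d x_i v_i : 0 \le x_i < L_i\}$ of dimension $d = d(\DC) = O_\DC(1)$ and volume $\prod_i L_i = O_\DC(n)$ with $A \subseteq P$, together with the coordinate vectors $x^{(a)} \in \mathbb{Z}^d$ of each input element. Then formulate the HBILP over binary variables $\xi_a$ with $d+1$ equality constraints: one fixing the subset size $\sum_a \xi_a = s$ and $d$ pinning each coordinate sum $\sum_a \xi_a x_i^{(a)} = y_i$. Iterating over the decompositions $t = sc + \sum_i v_i y_i$ inside $P$ (whose count is polynomially bounded by properness of the GAP), each HBILP call has $m = O_\DC(1)$ and $\Delta \le \max_i L_i = O_\DC(n)$; the assumed HBILP algorithm solves each in $\Delta^{O(m)} \cdot \poly(|\mathcal{I}|)$ time, yielding the $\DC$-SS algorithm.

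The main obstacle lies in the parameter bookkeeping of the converse: a naive count gives $\Delta^{O(m)} = n^{O_\DC(1)}$ per call rather than the $O_\DC(\poly(n))$ required for genuine \textsf{FPT}. Bridging this gap seems to require refining the GAP---for instance, breaking the long sides of $P$ into fixed-length blocks and recording block indices as additional coordinates, so that $\Delta$ depends only on $\DC$ at the cost of a controlled (and still $\DC$-only) increase in $m$. Properness of the GAP provided by the constructive Freiman's Theorem is precisely what keeps the enumeration over decompositions polynomially bounded throughout. The forward direction is conceptually cleaner but needs care when $n$ is very small relative to $(2\Delta+1)^m$, where the doubling bound degenerates; brute-force enumeration over $\{0,1\}^n$ fits within the claimed envelope in that regime, and standard preprocessing removes duplicate columns to make the case split clean.
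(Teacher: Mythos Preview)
You appear to have misread the definition of HBILP: it asks for $x\in\{0,1\}^n$ with $\langle Ax,s\rangle=t$ for a given step vector $s$---a \emph{single} linear equation---not the system $Ax=b$. This is why your forward reduction manufactures a fresh base $M$ and why your converse iterates over decompositions of $t$; neither is needed once $s$ is used. In the paper's converse (Lemma~\ref{lem:SS-to-HBILP-reduction}) the GAP coordinates become the columns of $A$ and the GAP steps $(y_1,\ldots,y_d)$ become $s$, so one HBILP call suffices. Your proposed refinement---making $\Delta$ depend only on $\DC$ while keeping $m=O_\DC(1)$---is impossible by a volume count: the GAP contains $n$ points, so $\Delta^m\geq n/O_\DC(1)$. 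The paper instead balances via Observation~\ref{obs:bigL-bound}, taking $m=d(\DC)$ and $\Delta\leq n^{2/d(\DC)}$, so that $\Delta^{O(m)}=n^{O(1)}$ with an exponent \emph{independent of~$\DC$}.

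The forward direction has a more serious gap. Your bound $\DC(\{c_j\})\leq(4\Delta+1)^m$ only lets the hypothetical FPT algorithm run in time $g(\Delta^{O(m)})\cdot\poly(n)$ for an \emph{arbitrary} computable $g$; this is not $\Delta^{O(m)}\cdot O_m(\poly(n))$ unless $g$ happens to be polynomial. Your brute-force fallback covers only $n=O(m\log\Delta)$, leaving the entire regime $m\log\Delta\ll n\leq(2\Delta+1)^m$ unhandled. The paper's fix (Lemma~\ref{lem:HBILP-to-SS}) makes the opposite trade: after a self-reduction to ensure distinct column encodings, it \emph{pads} the Subset Sum instance with up to $\Theta(\Delta^m n)$ dummy elements drawn from a translated copy of the ambient GAP, so that the doubling constant drops to $O_m(1)$ while the instance size inflates to $\Delta^{O(m)}n$. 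The FPT algorithm then runs in $O_m(\poly(\Delta^{O(m)}n))=\Delta^{O(m)}\cdot O_m(\poly(n))$, with $\Delta$ entering only through the polynomial dependence on instance size. This padding trick is the missing idea.
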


HBILP considers a constraint matrix $A
\in \Z^{m \times n}$ with entries bounded by $\Delta \coloneqq \norm{A}_\infty$, and asks whether there exists a solution $x \in \{0, 1\}^n$ such that $\langle Ax, s \rangle = t$ for a certain target $t$ and ``step vector'' $s$ orthogonal to a hyperplane. 
The best existing algorithm solves HBILP Feasibility in time $O_m(|\mathcal{I}|) + \Delta^{O(m^2)}$ (\cite{dadush2023optimizing}, Corollary 1\footnote{Corollary 2 in the arXiv preprint, 2303.02474.}). 

We can also reduce ILP Feasibility with bounded variables to HBILP feasibility (\Cref{lem:BILP-to-HBILP}). Thus \Cref{thm:HBILP-SS-equivalence} implies that an \FPT~algorithm for Subset Sum with constant doubling would imply a $\Delta^{O(m)} \cdot \poly(n)$ algorithm for ILP Feasibility with bounded variables (\Cref{cor:bilp-lb}). As previously noted in \cite{dadush2023optimizing}, reducing the exponent of $\Delta$ from $O(m^2)$ to $O(m)$ would be analogous to the recent improvement achieved by Eisenbrand and Weismantel for integer programs with \emph{unbounded} variables \cite{eisenbrand2019proximity}. 

Such an algorithm for ILP Feasibility would resolve the feasibility portion of one of the most significant open questions in the parameterized complexity of
integer programming: whether the $(\Delta^{O(m)} \cdot
O_m(|\mathcal{I}|))$-time algorithm for ILPs with unbounded variables can be
extended to ILPs with bounded variables
\cite{eisenbrand2019proximity,jansen2018integer,knop2020tight}. This would be a significant breakthrough in the area~\cite{jansen2018integer,knop2020tight}; accordingly, finding an FPT algorithm for $\DC$-Subset Sum is at least as difficult.


.

\subparagraph*{Contribution 4: \emph{Unbounded Subset Sum with Constant Doubling.} }
We can reduce an instance of Unbounded
Subset Sum with constant doubling to an ILP with $m$ constraints, $n$ binary
variables, and entries of $A$ bounded by $\Delta = n^{O(1/d(\DC))}$ using our constructive
Freiman's theorem. Because
solvable ILPs with bounded $\Delta$ admit solutions with small support, this
allows us to solve Unbounded Subset Sum in time $n^{O_\DC(\log\log\log n)}$, or
$n^{O_\DC(1)}$ under the hypothesis that a $v$-variable ILP $\mathcal{I}$ can be
solved in time $2^{O(v)}\poly(\mathcal{I})$ (\Cref{thm:dc-uss-xp}).

\subparagraph*{Contribution 5: \emph{$k$-SUM with Constant Doubling.}} The
application of recent algorithms for sparse nonnegative convolution
\cite{bringmann2022deterministic} allow us to efficiently solve $k$-SUM with
constant doubling in time {$\tilde{O}(\DC^{\lceil k/2 \rceil} \cdot 2^{O(k)} \cdot n)$}
(see~\Cref{thm:ksum}).

Because the $k$-SUM conjecture implies a lower bound of $\Omega(\DC^{\lceil k/2
\rceil -1} n)$, this leaves a $\DC$-factor gap. Part of the gap can be explained
by the fact that the Pl\"unnecke-Ruzsa inequality, which we use to derive the
upper bound, does not give the optimal exponent for $\DC$; applying recent improvements to the inequality
narrows the gap slightly. In the specific case of $(\DC, 4)$-SUM, our algorithm
achieves a runtime of $\tilde{O}(\DC n)$, which is optimal up to polylogarithmic
factors under the $k$-SUM conjecture.

\subsection{Organization }

We begin with mathematical preliminaries in \Cref{sec:prelims}, although some definitions required for the constructive proof of Freiman's Theorem in \Cref{sec:construct-Freiman} are deferred to the proof of this result in \Cref{apx:freiman-constructive}. In \Cref{sec:ILP}, we present our algorithms for ILP feasibility with bounded doubling. Finally, we present our bounds for Subset Sum in \Cref{sec:dc-subset-sum}, Unbounded Subset Sum in \Cref{sec:dc-uss}, and $k$-SUM in \Cref{sec:ksum}.

\section{ Preliminaries }
\label{sec:prelims}

\textbf{RAM Model.} Throughout the paper, we use the standard \emph{word RAM} model, in which input integers fit into a single machine word and logical and arithmetic operations on machine words take time $O(1)$. If we make the weaker assumption that operations on $b$-bit words take $\polylog(b)$ time, this adds a $\polylog(b)$ factor to \Cref{thm:Freiman-constructive} and the results that rely on it. 

\vspace{0.3cm}
\noindent
\textbf{Big-$O$ Notation.} We use $O_\DC$ notation to indicate we have
suppressed terms that depend only on $\DC$. For example, $O_\DC(n^2) = f(\DC) \cdot O(n^2)$ for some computable function $f$. $\tilde{O}$ notation suppresses polylogarithmic factors of $n$ and $\Delta$: for instance, $n \log^2(n) = \tilde{O}(n)$.

\vspace{0.3cm}
\noindent
\textbf{Sets.} We write $[n]$ for the integer set $\{1, 2, \dots, n\}$ and
$[a:b]$ (with $a \leq b$) for the integer set $[a, a+1, a+2, \dots, b]$. The
\emph{diameter} of an integer set $A$, denoted $\mathrm{diam}(A)$, is $\max_{a,
b \in A} |a - b|$. We write $\Sigma(X)$ as shorthand for the sum of elements $\sum_{x \in X} x$, and
$\Sigma(2^X)$ as shorthand for the set of subset sums $\{ \Sigma(X') \; : \; X' \subseteq X\}$. 

\vspace{0.3cm}
\noindent
\textbf{Vectors.}
Given a vector $x \in \Z^n$, we define $\supp(x) \subseteq [n]$ to be the set
of non-zero coordinates of $x$.

For $a,b \in \ZZ^n$ we say that $a$ is \emph{lexicographically prior} to $b$, denoted $a \lelex b$, if and only if there exists $k \in [n]$ such that $a[k] < b[k]$ and for every $1\le i < k$ it holds that $a_i = b_i$. Observe that $\lelex$ is a total order and that every set of vectors $S \subseteq \ZZ^n$ contains a unique element that is lexicographically minimal.

\vspace{0.3cm}
\noindent
\textbf{Matrices.} Given a $m \times n$ matrix $A$, we write $A[i, j]$ to denote the component of $A$ at row $i$, column $j$. We write $A[i, \cdot]$ and $A[\cdot, j]$ to denote the $i$th row and $j$th column of $A$, respectively.

We write $J_{m \times n}$ to denote the $m \times n$ matrix in which each entry is 1.

\vspace{0.3cm}
\noindent
\textbf{Group Theory and Linear Algebra.} Given an integer $m$, we write $\Z_m$ to denote the cyclic group of order $m$ (under addition). When $p$ is prime, every element of $\Z_p$ is a generator except for 0.

A \emph{lattice} in $\R^d$ is defined by $d$ linearly independent vectors $v_1, v_2, \dots, v_d \in \R^d$, collectively referred to as the \emph{basis} of the lattice. The lattice itself is the set 
\[
    \Lambda = \left\{ \sum_{i \in [d]} a_i v_i \; \middle| \; a_i \in \Z \right\}
\]
of all integer linear combinations of $v_1, v_2, \dots, v_d$, and each point in $\Lambda$ is referred to as a \emph{lattice vector}. 

The \emph{determinant} of a lattice, denoted $\det(\Lambda)$, is the determinant of the matrix whose columns are the lattice basis. Geometrically, $\det(\Lambda)$ is the volume of the \emph{fundamental parallelepiped} spanned by the lattice basis. In general, if $T$ is a convex body, we write $vol(T)$ to denote the volume of $T$.

Given two $m$-dimensional vectors $x$ and $y$, we use $\langle x, y \rangle$ to denote the dot product $x_1y_1 + \dots + x_my_m$.

\vspace{0.3cm}
\noindent
\textbf{Norms.} Given a real number $r$, we write $\norm{r}_{\R / \Z}$ to denote
the distance from the nearest integer. Given a finite-dimensional vector $v$,
the $L_\infty$ norm $\norm{v}_{\infty}$ denotes the largest absolute value of any coordinate.

\vspace{0.3cm}
\noindent
\textbf{Additive Combinatorics.} 
Given an integer set $X$, $X + X$ denotes the \emph{sumset} $\{a + b \; : \; a, b \in X\}$. We write $sX$, where $s$ is a positive integer, as shorthand for the iterated sumset $\underbrace{X + \ldots + X}_{s \text{ times}}$.

A \emph{generalized arithmetic progression} (GAP) $P$ is an integer set
\[
    P = \{\ell_1 y_1  + \ell_2 y_2 + \dots + \ell_d y_d \; : \; 0 \leq \ell_i < L_i, \forall i \in [d]\},
\]
defined by the integer vector $y = \{y_1, y_2, \dots, y_d\}$ and the dimension bounds $L_1, L_2, \dots, L_d$. We say that $P$ has \emph{dimension} $d$ and \emph{volume} $\prod_{i \in [d]} L_i$. When we write that an algorithm ``explicitly constructs'' or ``returns'' $P$, we mean specifically that the algorithm computes $y_i$ and $L_i$ for all $i \in [d]$. 

We can think of $P$ as a projection of a $d$-dimensional parallelepiped onto the line. $P$ is \emph{proper} if $|P| = L_1 L_2\dots L_d$, that is, if each point in the parallelepiped projects to a unique point on the line. 


\section{Freiman's Theorem Made Constructive in FPT Time}
\label{sec:construct-Freiman}

Freiman's Theorem states that any integer set $X$ with constant doubling is contained inside a generalized arithmetic progression of constant dimension and volume at most $|X|$ times a constant.

\begin{theorem}[Freiman's Theorem, \cite{freiman1964addition}, see  \cite{zhao2022graph} for a modern presentation]
    \label{thm:freiman}
    Any finite integer set $X$ with $|X + X| \leq \DC|X|$ is contained in a
    GAP $P$ of dimension $d(\DC)$ and volume $v(\DC)|X|$, where $d$ and $v$ are computable functions that depend
    only on $\DC$.
\end{theorem}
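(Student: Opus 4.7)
The plan is to follow the classical Ruzsa-style strategy: pass the problem from $\Z$ to a finite cyclic group where Fourier analysis is available, extract a structured set there, and pull it back. As a preparatory step I would invoke the Pl\"unnecke--Ruzsa inequality to upgrade the hypothesis $|X+X| \leq \DC |X|$ into uniform control over all iterated sumsets: $|sX - tX| \leq \DC^{s+t}|X|$. This is used repeatedly to bound things like $|8X - 8X|$, which appear in the modeling step next.

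Next I would apply Ruzsa's modeling lemma to obtain a Freiman $s$-isomorphism (taking $s = 8$ is standard) from a large subset $X' \subseteq X$ with $|X'| \geq |X|/2$ onto a set $\widetilde{X} \subseteq \Z_N$ for some prime $N = O_\DC(|X|)$. The density $|\widetilde{X}|/N$ is then bounded below by a constant depending only on $\DC$, and all additive identities of length up to $s$ are preserved. The heart of the proof is then Bogolyubov's lemma: by expanding the convolution $1_{\widetilde{X}} * 1_{\widetilde{X}} * 1_{-\widetilde{X}} * 1_{-\widetilde{X}}$ via the Fourier transform on $\Z_N$ and separating large from small frequency characters using Parseval, one shows that $2\widetilde{X} - 2\widetilde{X}$ contains a Bohr set $B(\Gamma, \rho)$ with both $|\Gamma|$ and $1/\rho$ depending only on $\DC$.

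A Bohr set is not yet a GAP, so next I would apply Minkowski's second theorem on successive minima to the lattice associated with $\Gamma \subseteq \Z_N$. This produces a proper GAP $P \subseteq B(\Gamma, \rho)$ of dimension at most $|\Gamma|$ and volume at least a $|\Gamma|$-dependent constant times $\rho^{|\Gamma|} N$, hence of cardinality at least a $\DC$-dependent fraction of $|X|$. With $P \subseteq 2\widetilde{X} - 2\widetilde{X}$ and $|P|$ large, Ruzsa's covering lemma gives a set $T$ with $|T| = O_\DC(1)$ such that $\widetilde{X} \subseteq P + T$. Pulling back through the Freiman isomorphism yields the same containment in $\Z$ (after choosing the modeling interval large enough that no wrap-around occurs), and $T$ can be absorbed into the GAP by enlarging each length parameter $L_i$ by a constant factor; any residual part $X \setminus X'$ is then handled by iterating the argument a constant number of times and taking a Minkowski sum of the resulting GAPs.

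The main obstacle is the Bogolyubov--Minkowski step: extracting an honest proper GAP from a Bohr set is where the final dimension $d(\DC)$ and volume constant $v(\DC)$ are essentially determined, and tracking how they compound through the Pl\"unnecke--Ruzsa, modeling, covering, and absorption stages is the delicate quantitative work of the proof. Everything else amounts to bookkeeping, but a careless application of any single step can blow up $d$ or $v$ to a non-computable tower in $\DC$, so the ordering and constants of these inequalities must be chosen with care.
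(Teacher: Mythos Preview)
Your overall strategy matches the paper's (which follows Zhao's presentation of Ruzsa's proof): Pl\"unnecke--Ruzsa for iterated sumset control, the modeling lemma with $s=8$ to pass to $\Z_N$, Bogolyubov to find a Bohr set in $2\widetilde{X}-2\widetilde{X}$, successive minima to extract a large proper GAP inside the Bohr set, and Ruzsa's covering lemma to finish. Two minor inaccuracies: the modeling lemma with $s=8$ yields $|X'| \geq |X|/8$, not $|X|/2$; and the covering lemma gives $\widetilde{X} \subseteq P - P + T$, not $P + T$, with $T$ absorbed by adding $|T|$ new dimensions to the GAP rather than by scaling the existing $L_i$.

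The substantive gap is in how you handle $X \setminus X'$. You apply the covering lemma inside $\Z_N$, so after pulling back you have only contained $X'$, and you then propose to ``iterat[e] the argument a constant number of times'' on the residual and take a Minkowski sum. This does not work: each pass through the modeling lemma captures only a $1/s$ fraction of the current set, so driving the residual to empty takes $\Theta(\log |X|)$ iterations, not $O_\DC(1)$; the Minkowski sum of the resulting GAPs would then have dimension growing like $d(\DC)\cdot \log n$ and volume polynomial in $n$ rather than $O_\DC(n)$. The correct move---and what the paper does---is to pull the GAP back to $Q \subseteq 2X' - 2X' \subseteq 2X - 2X$ \emph{first}, observe that $|Q + X| \leq |3X - 2X| \leq \DC^{5}|X| = O_\DC(|Q|)$ by Pl\"unnecke--Ruzsa, and then apply the covering lemma \emph{once, in $\Z$, to the full set $X$}. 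This gives $X \subseteq Q - Q + T$ with $|T| = O_\DC(1)$ in a single step, and no iteration is required.
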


We make this statement constructive by showing an algorithm that, given $X$, can explicitly construct the progression $P$ in \FPT~time. In fact, the construction is near-linear, losing only a $\polylog(n)$ factor and a (large) function of $\DC$. 

\begin{theorem}[\textsf{FPT} Freiman's Theorem]
    \label{thm:Freiman-constructive}
    Let $A$ be a set of $n$ integers satisfying $|A+A| \leq \DC |A|$. There exists an $\tilde{O}_\DC(n)$ algorithm that, with probability $1 - n^{-\gamma}$ for an arbitrarily large constant $\gamma > 0$, returns\footnote{ Specifically, we compute the values $x_1, x_2, \dots, x_{d(\DC)}$ and $L_1, L_2, \dots L_{d(\DC)}$. } an arithmetic progression
         \[
             P = \{x_1\ell_1 + x_2\ell_2 + \dots + x_{d(\DC)}\ell_{d(\DC)} \; : \; \forall i, \ell_i \in [L_i] \} \supseteq A
         \]
    with dimension $d(\DC)$ and volume $v(\DC) \cdot |A|$, where $d$ and $v$ are computable functions that depend
    only on $\DC$.

    (We make the standard assumption that arithmetic operations on integers require $O(1)$ time.)
\end{theorem}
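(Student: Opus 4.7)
The plan is to adapt the classical proof of Freiman's theorem, which proceeds through (i) the Plünnecke–Ruzsa inequality to control iterated sumsets, (ii) Ruzsa's modeling lemma to embed a large fraction of $A$ into a cyclic group of comparable size, (iii) a Bogolyubov-type Fourier argument showing that $2A' - 2A'$ contains a large Bohr set, (iv) Minkowski's second theorem to inscribe a proper GAP inside the Bohr set, and (v) Ruzsa's covering lemma to cover $A$ by few translates of that GAP. I will make each step algorithmic and charge its cost against an $\tilde{O}_\DC(n)$ budget.

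First, I would compute $A+A$ and a few further iterated sumsets in time $\tilde{O}(|A+A|) = \tilde{O}_\DC(n)$ using a sparse nonnegative convolution routine (e.g., \cite{bringmann2022deterministic}); by Plünnecke–Ruzsa these all have size $O_\DC(n)$. Next, I would pick a random prime $N = \Theta_\DC(n)$ and a random multiplier $\lambda \in \Z_N^*$ and argue, as in Ruzsa's modeling lemma, that the map $x \mapsto \lambda x \bmod N$ is a Freiman $8$-isomorphism on some subset $A' \subseteq A$ of size $|A'| \geq |A|/O_\DC(1)$ with good probability; collisions can be detected by sweeping the precomputed sumset tables, and boosting the success to $1 - n^{-\gamma}$ is achieved by independent repetitions. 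Then, working inside $\Z_N$, I would compute the Fourier transform of $\mathbf{1}_{A'}$ via FFT in time $\tilde{O}(N) = \tilde{O}_\DC(n)$, extract the spectrum $\Gamma$ of characters with $|\widehat{\mathbf{1}_{A'}}(\chi)|$ above threshold, and invoke the Bogolyubov argument to conclude $B(\Gamma, 1/4) \subseteq 2A' - 2A'$. Chang's lemma (or its weaker form via Plünnecke) then bounds $|\Gamma| = O_\DC(1)$.

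Finally, I would apply a constant-dimensional lattice reduction (LLL plus Minkowski's second theorem), at cost $O_\DC(\polylog n)$ in the word RAM model, to find a proper GAP $P_0$ of dimension $O_\DC(1)$ and volume $\Omega_\DC(N)$ inscribed in $B(\Gamma, 1/4)$, and pull $P_0$ back through the Freiman isomorphism to a proper GAP $P_1 \subseteq 2A - 2A$ of volume $\Omega_\DC(|A|)$. An algorithmic form of Ruzsa's covering lemma then greedily selects $X \subseteq A$ with $|X| = O_\DC(1)$ satisfying $A \subseteq X + (P_1 - P_1)$, using membership queries against a hash table of $P_1$ in $\tilde{O}_\DC(n)$ total time; adjoining $X$ to the generators of $P_1 - P_1$ yields the desired GAP $P \supseteq A$ of dimension $d(\DC)$ and volume $v(\DC)\cdot|A|$. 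The main obstacle is the Fourier/lattice step: FFT over $\Z_N$ with $N = \Theta_\DC(n)$ is fast in principle, but one must ensure that the large-spectrum extraction, the Bohr radius control, and the subsequent lattice reduction all preserve the property that dimension and volume remain functions of $\DC$ alone, without leaking any dependence on $n$. A secondary subtlety is amplifying the modeling success probability to $1 - n^{-\gamma}$, which I would handle by verifying each candidate map against the precomputed $4$-fold sumset in $\tilde{O}_\DC(n)$ time.
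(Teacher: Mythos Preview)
Your proposal is correct and follows the same four-step route as the paper (constructive Ruzsa modeling lemma, Bogolyubov via FFT over the small cyclic group, extracting a GAP from the Bohr set via the geometry of numbers, and finally the constructive Ruzsa covering lemma, for which the paper cites \cite{abboud2022stronger}). The only notable differences are in implementation: the paper uses the standard two-stage modeling lemma (first embed into $\Z_q$ for a large prime $q > \max(8A-8A)$, randomize by $\lambda \in \Z_q^*$, and only then reduce modulo the small $m = \Theta_\DC(n)$---in your single-small-prime description the multiplier $\lambda \in \Z_N^*$ is redundant), and for the GAP-in-Bohr-set step the paper bypasses LLL entirely by brute-force enumerating all $O_\DC(n)$ lattice points of $\Lambda \cap [-1,1)^d$, sorting them by $L_\infty$-norm, and greedily selecting a directional basis, which already fits the $\tilde O_\DC(n)$ budget and avoids any approximation-factor bookkeeping.
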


In outline, the proof proceeds as follows:
\begin{itemize}[nosep]
    \item \emph{Step 1}: We prove a constructive version of \emph{Ruzsa's Modeling Lemma}, which allows us to map our integer set to a small cyclic group such that additive structure is preserved.
    \item \emph{Step 2}: We prove a constructive version of \emph{Bogolyubov's Lemma}, which allows us to find a Bohr set contained in the cyclic group. Roughly speaking, the Bohr set (1) behaves ``like a subspace'' and (2) is within a constant factor of the size of our original set. This step requires the Fast Fourier Transform (FFT).
    \item \emph{Step 3}: Finding a large GAP within our Bohr set requires finding a small basis for a certain lattice. Fortunately, the lattice has dimension $O_\DC(1)$, so we can enumerate the entire set of short lattice vectors.
    \item \emph{Step 4}: At this point we are left with a GAP that covers a constant fraction of the image of our original set in the cyclic group. Using \emph{Ruzsa's Covering Lemma}, previously made efficiently constructive by Abboud, Bringmann, and Fischer \cite{abboud2022stronger}, we can find a GAP that covers the entire image of our input set. We then map back to the integers to complete the construction.
\end{itemize}

We defer the full proof to \Cref{apx:freiman-constructive}. The following observation further simplifies \Cref{thm:Freiman-constructive}.

\begin{observation}
    In the GAP $P$ guaranteed by \Cref{thm:Freiman-constructive}, without loss
    of generality we can assume 
    \[
        L_i \leq n^{2/d(\DC)}
    \]
    for all $i \in [d(\DC)]$, where $d(\DC)$ denotes the dimension of $P$.
    \label{obs:bigL-bound}
\end{observation}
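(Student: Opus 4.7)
The plan is to show that whenever a coordinate length $L_i$ of the GAP exceeds $n^{2/d(\DC)}$, one can split that dimension into several dimensions of smaller length, paying only a constant factor in both dimension and volume, and absorb the resulting blowup into redefinitions of $d(\DC)$ and $v(\DC)$.

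First I would fix notation: let $d_0$ and $v_0$ denote the values of $d(\DC)$ and $v(\DC)$ produced by \Cref{thm:Freiman-constructive}, so $\prod_{i=1}^{d_0} L_i \leq v_0 n$. I would set a new target dimension $d := 2 d_0 + 2$ and $b := \lceil n^{2/d} \rceil$. The splitting step: for each $i$ with $L_i > b$, replace the generator $x_i$ of length $L_i$ by the chain $x_i,\, b x_i,\, b^2 x_i,\, \ldots,\, b^{k_i - 1} x_i$ of respective lengths $b, b, \ldots, b, \lceil L_i / b^{k_i - 1} \rceil$, where $k_i := \lceil \log_b L_i \rceil$. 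Writing any $\ell \in \{0, 1, \ldots, L_i - 1\}$ in base $b$ witnesses that the resulting GAP contains every point of the original GAP, so $A$ is still covered, and every new coordinate length is at most $b \leq n^{2/d}$ by construction.

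The hard part will be bounding the new total dimension $\sum_i k_i$ by $d$. Using $k_i \leq (d \log L_i)/(2 \log n) + 1$ and $\sum_i \log L_i \leq \log(v_0 n)$, a direct calculation yields
\[
    \sum_{i=1}^{d_0} k_i \;\leq\; \frac{d}{2} \;+\; \frac{d \log v_0}{2 \log n} \;+\; d_0.
\]
For $n$ larger than some $\DC$-dependent threshold (say $n \geq v_0^{d}$), this is at most $d/2 + d_0 + 1/2 \leq d$; smaller $n$ constitute a base case of size $O_\DC(1)$, handled by the trivial one-dimensional GAP $\{0, 1, \ldots, n - 1\}$. The new volume is at most $\prod_i (L_i + b^{k_i - 1}) \leq 2^{d_0} v_0 n = O_\DC(n)$, and padding with trivial length-$1$ generators brings the dimension up to exactly $d$. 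Redefining $d(\DC) := d$ and $v(\DC) := 2^{d_0} v_0$ then gives the stated observation; the only subtle step is the dimension count above, which would fail without the slack constant in the choice of $d$.
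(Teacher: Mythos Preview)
Your approach---splitting each oversized dimension into a chain $x_i, b x_i, b^2 x_i, \ldots$ via base-$b$ expansion and bounding the resulting total dimension through the volume constraint $\prod_i L_i \le v_0 n$---is exactly what the paper does; in fact your dimension count is more explicit than the paper's, which simply asserts that the new dimension is at most $2d(\DC)$ without tracking the ceilings or the $\log v_0/\log n$ term.

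One slip: your base case for $n < v_0^{d}$ proposes covering $A$ by the one-dimensional GAP $\{0,1,\ldots,n-1\}$, but $A$ is an arbitrary $n$-element integer set and need not lie in $[0,n)$ at all. This is easy to patch (the splitting argument still gives $\sum_i k_i = O_\DC(1)$ for every $n\ge 2$, so one can absorb the excess into a larger choice of $d(\DC)$; the paper is equally informal on this point), but as written the base case is incorrect. A second, cosmetic issue: you claim ``$b \le n^{2/d}$'' immediately after defining $b = \lceil n^{2/d}\rceil$; take the floor instead.
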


We defer the proof of~\Cref{obs:bigL-bound} to \Cref{apxsec:bigL-bound}.

\section{Integer Programming with Constant Doubling}
\label{sec:ILP}

For an integer program, we consider the doubling constant of the \emph{column set} of the constraint matrix $A$ as our parameter. This is because the column is the smallest unit affected by each variable $x_i$ when we compute the product $Ax$; as a result, duplicate columns in $A$ play a similar role to duplicate elements in a Subset Sum instance, and indeed can often be eliminated without loss of generality. This formulation allows $A$ to contain duplicate entries (for example, multiple $0$'s and $1$'s) as long as all columns are distinct.

Given a matrix $A$, we use the shorthand
\[
    \mathcal{A} \coloneqq \mathcal{A}(A) = \{ A[\cdot, j] \; \mid \; j \in [n] \}
\]
to denote the set of column vectors of $A$. Vector set addition (that is,
$\mathcal{A} + \mathcal{A}$) is defined in the natural way, using vector instead
of integer addition. 

\begin{prob}[$\DC$-Integer Linear Programming (ILP) Feasibility]{prob:ILP}
    \textbf{In:} An integer linear program specified by an integer matrix $A \in
    \Z^{m \times n}$ with $n$ distinct columns and an integer target $b
    \in \Z^m$, such that the column set $\mathcal{A} \coloneqq \mathcal{A}(A)$ satisfies 
    $
        |\mathcal{A} + \mathcal{A}| \leq \DC |\mathcal{A}|
    $
    for a constant $\DC$ independent of $m$ and $n$. \\
    \textbf{Out:} Vector $x \in \ZZ^n$ such that $Ax = b$, or `NO' if no solution exists.
\end{prob}

If each variable $x_i$ is constrained to satisfy $x_i \in [\ell_i : u_i]$, where $\ell_i$ and $u_i$ indicate the lower and upper bounds of a range of valid variable assignments, we refer to the problem as \textbf{$\DC$-Bounded ILP Feasibility}. Further restricting the variables to $x \in \{0,1\}^n$ yields \textbf{$\DC$-Binary ILP Feasibility}.

\begin{remark}
    \label{remark:bounded-to-binary}
Bounded ILPs with $n$ variables and $|\ell_i|, |u_i| = O(\poly(n))$ for $i \in [n]$ can be converted into equivalent binary ILPs with $\poly(n)$ variables by duplicating the columns of $A$.
\end{remark}

\subsection{ \texorpdfstring{$\DC$}{}-Binary ILP Feasibility }

Given a constraint matrix with constant doubling, Freiman's Theorem bounds the
number of possible values for $Ax$ corresponding to any variable assignment if
the variables are binary or bounded. This allows us to solve the problem
efficiently via dynamic programming, and does not actually require constructing
the GAP guaranteed by Freiman's Theorem.\footnote{The constructive Freiman's theorem will be required later, specifically in ~\Cref{lem:SS-to-HBILP,thm:dc-uss-xp}. The current result emphasizes the usefulness of parameterization
in the doubling constant.}

\thmDCBILP*
\begin{proof}
Fix an instance of $\DC$-Binary ILP feasibility specified by $A \in \Z^{m \times n}$ and $b \in \Z^m$, with the column set $\mathcal{A}$ satisfying $|\mathcal{A} + \mathcal{A}| \leq \DC |\mathcal{A}|$.

Let $L \coloneqq \Sigma(2^{\mathcal{A}})$ denote the list of all (vector) sums that can
be attained by adding together any subset of the columns of $A$. Equivalently, this is the set of possible outputs $Ax$ for any $x \in \{0,1\}^n$. Our first goal is to bound $|L|$.

First, we observe that there exists a GAP $P$ with dimension $d(\DC)$ and volume $v(\DC) n$ such that 
\[
    \mathcal{A} \subseteq P = \{x_1k_1 + x_2k_2 + \dots + x_{d(\DC)}k_{d(\DC)} \; : \; \forall i, k_i \in [K_i] \},
\]
where $x_i \in \Z^m$ for all $i \in [d(\DC)]$. This is true even though $\mathcal{A}$ is a set of integer vectors, as Freiman's Theorem holds for torsion-free\footnote{That is, groups in which only the identity element has finite order.} commutative groups (\cite{ruzsa2009sumsets}, Theorem 8.1).

Thus $L$ is contained in the GAP
\[
    P' = \{x_1k_1 + x_2k_2 + \dots + x_{d(\DC)}k_{d(\DC)} \; : \; \forall i, k_i \in [n \cdot K_i] \},
\]
which implies
\begin{equation}
    \label{eq:LA-bound}
    |L| \leq |P'| \leq n^{d(\DC)} |P| = n^{d(\DC)} v(\DC) n = n^{O_\DC(1)}.
\end{equation}

To complete the proof, we claim that we can enumerate $L$ efficiently via dynamic programming, using the following procedure:
Initially, we set $L_1 = A[1,\cdot]$. Then, we iterate $i = 2,3,\ldots,n$. In the
$i$th iteration, we construct the sorted list $L_i$, defined as:

\begin{displaymath}
    L_i \coloneqq L_{i-1} \cup \{a+A[i,\cdot] \mid a \in L_{i-1}\}.
\end{displaymath}

Finally, we return list $L = L_n$. Correctness of the above algorithm follows
immediately by a construction. For the running time, observe that $L_i$ can be
constructed in $O(|L_i|)$ time.
Because each of the $n$ iterations of the subprocedure takes time $O(|L_i|) = O(|L|)$, the total runtime is, by \eqref{eq:LA-bound}, at most
$n \cdot O(|L|) = n^{O_\DC(1)}$.
\end{proof}

\subsection{\texorpdfstring{$\DC$}{}-Bounded ILP Feasibility}
In general, ILPs with polynomially bounded variables can be converted to ILPs with binary
variables (see \Cref{remark:bounded-to-binary}); however, the straightforward
reduction can create many duplicate columns in the resulting Binary ILP. Although it is possible to get rid of the duplicate columns, it is easier to extend the previous result to $\DC$-Bounded ILP Feasibility directly:

\begin{corollary}
    \label{cor:DC-BoundedILP}
    An instance $\mathcal{I}$ of $\DC$-Bounded ILP Feasibility such that $\ell_i \le
    x_i \le u_i$ and $|\ell_i|, |u_i| = \poly(n)$ for $i \in [n]$ can be solved in time $n^{O_\DC(1)} \cdot \poly(|\mathcal{I}|)$.
\end{corollary}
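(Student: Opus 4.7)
The plan is to adapt the proof of \Cref{thm:DC-BILP} directly, avoiding the column-duplication reduction of \Cref{remark:bounded-to-binary} (which would artificially inflate the number of variables before the dynamic programming step). As in the binary case, the key object is the set
\[
    L = \{Ax \mid x \in \Z^n,\ \ell_i \le x_i \le u_i \text{ for all } i\in[n]\}
\]
of attainable left-hand sides; feasibility reduces to testing whether $b \in L$.

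First I would show $|L| = n^{O_\DC(1)}$. Applying Freiman's Theorem to $\mathcal{A}$ yields a GAP $P$ of dimension $d(\DC)$ and volume $v(\DC)\cdot n$ with generators $y_1,\ldots,y_{d(\DC)}\in\Z^m$ and side lengths $K_1,\ldots,K_{d(\DC)}$, such that every column has a representation $A[\cdot,j] = \sum_i k_i^{(j)} y_i$ with $0 \le k_i^{(j)} < K_i$. Any attainable vector may then be written as
\[
    Ax = \sum_{j=1}^n x_j A[\cdot,j] = \sum_{i=1}^{d(\DC)} \Bigl( \sum_{j=1}^n x_j\, k_i^{(j)} \Bigr) y_i,
\]
and since $|x_j| = \poly(n)$, each inner coefficient is bounded in absolute value by $n \cdot \poly(n) \cdot K_i = \poly(n) \cdot K_i$. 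Hence $L$ sits inside a GAP over the same generators with blown-up side lengths $\poly(n)\cdot K_i$, whose volume is $\poly(n)^{d(\DC)} \cdot v(\DC) \cdot n = n^{O_\DC(1)}$.

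The dynamic program mirrors \Cref{thm:DC-BILP}, but each transition now sweeps over the admissible integer values of $x_j$: initialize $L_1 = \{k\cdot A[\cdot,1] : \ell_1 \le k \le u_1\}$ and, for $j \ge 2$, update
\[
    L_j = \bigcup_{k=\ell_j}^{u_j} \{a + k\cdot A[\cdot,j] : a\in L_{j-1}\}.
\]
Every $L_j\subseteq L$, so $|L_j| \le n^{O_\DC(1)}$; since $u_j-\ell_j+1 = \poly(n)$, building $L_j$ from $L_{j-1}$ takes $\poly(n)\cdot n^{O_\DC(1)}\cdot\poly(m)$ time, where the $\poly(m)$ factor accounts for vector arithmetic. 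Iterating over the $n$ columns gives the claimed $n^{O_\DC(1)} \cdot \poly(|\mathcal{I}|)$ runtime, and feasibility is decided by testing whether $b\in L_n$.

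The main subtlety I anticipate is verifying that letting $x_j$ range over $[\ell_j : u_j]$ does not disturb the Freiman-based volume bound: the doubling constant is defined on the column \emph{set}, so Freiman's Theorem applies verbatim as in \Cref{thm:DC-BILP}, and the only new ingredient is that the integer coefficients contribute an extra $\poly(n)$ factor per GAP side length, which gets absorbed into $n^{O_\DC(1)}$. Beyond this bookkeeping, the argument is a direct generalization of the binary case.
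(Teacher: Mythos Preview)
Your proposal is correct and follows essentially the same approach as the paper: bound $|L|$ by containing it in the GAP $P$ with each side length scaled by a $\poly(n)$ factor, then run the same dynamic program with each transition sweeping over the $\poly(n)$ admissible values of $x_j$. One tiny caveat: the inclusion $L_j \subseteq L$ need not hold literally when $0 \notin [\ell_i:u_i]$ for some $i>j$, but this is harmless since each $L_j$ is directly contained in the same scaled GAP, giving $|L_j| \le n^{O_\DC(1)}$ anyway.
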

\begin{proof}
    Modify the proof of \Cref{thm:DC-BILP} by considering the list $L'$ of all possible outputs $Ax$ for each valid assignment of variables $x$, using the variable bounds $x_i \in [\ell_i : u_i]$ for $i \in [n]$ instead of $x \in \{0,1\}^n$. As before, we bound $|L'|$.
    
    Observe that $L'$ is contained in the GAP $P''$ obtained by scaling each range bound $L_i$ of $P$ by a factor of $n^{O(1)}$, where the hidden constant is determined by the bounds on the variables. It follows that $|L'| = n^{O_\DC(1)}$. We can enumerate $L'$ by modifying the procedure given above so that Step 2 merges a polynomial number of lists, one for each variable assignment.
\end{proof}


\section{Subset Sum with Constant Doubling }
\label{sec:dc-subset-sum}

We now consider the useful applications of parameterization in the doubling constant to Subset Sum. Formally, we consider the following problem:

\begin{prob}[$\DC$-Subset Sum]{prob:KSS}
    \textbf{In:} An integer set $Z = \{z_1, z_2, \dots, z_n\}$ such that $|Z+Z| \leq \DC|Z|$ and an integer target $t$. \\
    \textbf{Out:} $S \subseteq Z$ such that $\Sigma(S) = t$, or `NO' if no solution exists.
\end{prob}

$\DC$-Subset Sum is equivalent to $\DC$-Binary ILP with a single constraint. As a result, \Cref{thm:DC-BILP} yields the following corollary for Subset Sum with $n$ variables:

\begin{corollary}[$\DC$-Subset Sum is in XP]
    \label{cor:dc-ss-xp}
    $\DC$-Subset Sum can be solved in time $n^{O_\DC(1)}$.    
\end{corollary}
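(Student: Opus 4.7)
The plan is to view $\DC$-Subset Sum as the single-constraint ($m=1$) special case of $\DC$-Binary ILP Feasibility and invoke \Cref{thm:DC-BILP} as a black box, exactly as foreshadowed in the paragraph preceding the corollary.

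Concretely, given a $\DC$-Subset Sum instance $(Z, t)$ with $Z = \{z_1, \dots, z_n\}$, I would form the $1 \times n$ integer matrix $A$ whose $j$th column is the one-dimensional vector $(z_j)$, and take $b = (t) \in \Z^1$. Then a binary vector $x \in \{0,1\}^n$ satisfies $Ax = b$ if and only if the set $S = \{z_j : x_j = 1\}$ is a subset of $Z$ with $\Sigma(S) = t$. Hence solutions of this Binary ILP are in bijection with solutions to the Subset Sum instance, and infeasibility of the ILP corresponds to the `NO' output.

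It remains to verify that the reduction meets the hypotheses of \Cref{thm:DC-BILP}. The columns of $A$ are distinct because the elements of $Z$ are distinct, and identifying one-dimensional vectors with integers gives $\mathcal{A}(A) = Z$, so
\[
    |\mathcal{A}(A) + \mathcal{A}(A)| = |Z + Z| \leq \DC \cdot |Z| = \DC \cdot |\mathcal{A}(A)|,
\]
i.e.\ the constructed matrix has column set of doubling constant at most $\DC$. Applying \Cref{thm:DC-BILP} solves the instance in time $n^{O_\DC(1)} \cdot \poly(|\mathcal{I}|)$; since $m = 1$ and each $z_i$ fits in $O(1)$ machine words, $|\mathcal{I}| = \poly(n)$ in the word RAM model, and the bound collapses to $n^{O_\DC(1)}$, as required. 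There is essentially no technical obstacle here: the only thing to check is that the doubling parameter of the column set of the constructed matrix matches the doubling parameter of $Z$, which is immediate; the rest is a direct application of the prior theorem.
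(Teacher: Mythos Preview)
Your proposal is correct and matches the paper's own approach: the paper simply observes that $\DC$-Subset Sum is equivalent to $\DC$-Binary ILP Feasibility with a single constraint and invokes \Cref{thm:DC-BILP}, which is precisely the reduction you spell out. There is nothing further to add.
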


At this point, it is natural to wonder whether $\DC$-Subset Sum can be
solved in time $O_\DC(1) \cdot n^{O(1)}$: that is, whether Subset Sum is in
\FPT with respect to the doubling constant. While we cannot yet prove or
disprove this statement, we can show that it is equivalent to an open
problem in the parameterized complexity of integer programming. The
remainder of this section proves this reduction in both directions.

\subsection{ Reduction from \texorpdfstring{$\DC$}{}-Subset Sum to Hyperplane-Constrained Binary ILP Feasibility }
\label{subsec:SS-to-HBILP}

    Recent generalizations of Integer Programming consider the problem of optimizing the value $g(Ax)$ in place of $Ax$, where $g: \R^m \rightarrow \R$ is a low-dimensional objective function \cite{dadush2023optimizing}. The mapping given by Freiman's Theorem provides a natural reduction from Subset Sum with constant doubling to a problem of this form. Specifically, $\DC$-Subset Sum reduces to a Binary ILP feasibility problem in which the constraint matrix $A$ has bounded entries and a feasible solution is any $x$ satisfying $\langle Ax, s \rangle = t$ for a specific ``step vector'' $s$. Formally, our problem is as follows:

    \begin{prob}[Hyperplane-Constrained Binary ILP (HBILP) Feasibility]{prob:planar-ILP}
        \textbf{In:} An integer matrix $A \in \Z^{m \times n}$, a step vector $s \in \Z^m$, and a target integer $t $. We let $\Delta \coloneqq \norm{A}_\infty$, the magnitude of $A$'s largest entry. \\
        \textbf{Out:} A vector $x \in \{0,1\}^n$ such that $\langle Ax, s \rangle = t$, or `NO' if no solution exists.
    \end{prob}

    The reduction from $\DC$-Subset Sum to HBILP Feasibility (\Cref{lem:SS-to-HBILP}) is straightforward but relies crucially on our constructive Freiman's Theorem.

    \begin{lemma}
        \label{lem:SS-to-HBILP-reduction}
        For any fixed instance $(Z, t)$ of $\DC$-Subset Sum, there exists a HBILP Feasibility instance given by $A \in \Z^{d(\DC) \times n}$, $s \in \Z^{d(\DC)}$, and $t$ for some function $d(\DC)$ such that a vector $x \in \{0,1\}^n$ satisfies 
        \[
            Ax = t \quad \text{ if and only if } \quad \sum_{i \; : \; x_i = 1} z_i = t.
        \]
        Moreover, $\Delta \coloneqq \norm{A}_\infty \leq n^{2 / d(\DC)}$, and the reduction can be computed in time $\tilde{O}_\DC(n)$ with success probability $1 - n^{-\gamma}$ for an arbitrarily small constant $\gamma$.
    \end{lemma}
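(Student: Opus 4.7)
The plan is to exploit the constructive Freiman's theorem to embed the input integers $Z$ into a low-dimensional, polynomially-bounded GAP, and then read off the HBILP matrix from the coordinate representation. Concretely, I first invoke \Cref{thm:Freiman-constructive} on $Z$; in $\tilde{O}_\DC(n)$ time and with probability $1 - n^{-\gamma}$, this yields generators $x_1,\ldots,x_d$ and bounds $L_1,\ldots,L_d$ (with $d = d(\DC)$) defining a GAP $P \supseteq Z$ of volume at most $v(\DC)\cdot n$, and by \Cref{obs:bigL-bound} satisfying $L_j \le n^{2/d}$ for every $j$.

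Next I need a coordinate representation for each $z_i$. For this I would enumerate all $\prod_j L_j \le v(\DC)\cdot n$ coordinate tuples $(\ell_1,\ldots,\ell_d) \in \prod_j [L_j]$, compute the integer $\sum_j x_j\ell_j$ each represents, and hash the resulting (integer, tuple) pairs into a table. A lookup for each $z_i$ then returns a representation $z_i = \sum_{j=1}^d x_j\ell_j^{(i)}$ with $\ell_j^{(i)} \in [L_j]$. The whole step runs in $\tilde{O}_\DC(n)$ time and inherits the failure probability of \Cref{thm:Freiman-constructive}.

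I would then build the matrix $A \in \Z^{d \times n}$ whose $i$-th column is $(\ell_1^{(i)}, \ldots, \ell_d^{(i)})^T$ and set the step vector $s = (x_1,\ldots,x_d)^T$. For any $y \in \{0,1\}^n$, a direct expansion gives
\[
   \langle Ay,\,s\rangle \;=\; \sum_{j=1}^d x_j \sum_{i=1}^n \ell_j^{(i)} y_i \;=\; \sum_{i=1}^n y_i \sum_{j=1}^d x_j \ell_j^{(i)} \;=\; \sum_{i:\,y_i=1} z_i,
\]
which is exactly the equivalence between HBILP feasibility with target $t$ and the $\DC$-Subset Sum instance $(Z,t)$. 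The bound $\norm{A}_\infty \le n^{2/d(\DC)}$ follows immediately from $\ell_j^{(i)} \le L_j$ combined with \Cref{obs:bigL-bound}.

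The main technical obstacle lies in the second step: inverting the GAP representation, i.e.\ recovering explicit coordinates for each $z_i$, is not obviously cheap and in general amounts to a small subset-sum-like problem. The reason it is tractable here is precisely the $v(\DC)\cdot n$ volume bound on $P$ guaranteed by \Cref{thm:Freiman-constructive}, which lets us enumerate every point of $P$ together with its coordinates and hash; without the volume bound this inversion would dominate the reduction. Everything else is a routine composition of the bounds from \Cref{thm:Freiman-constructive} and \Cref{obs:bigL-bound}.
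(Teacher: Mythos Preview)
Your proposal is correct and follows essentially the same route as the paper: apply \Cref{thm:Freiman-constructive} to get the GAP, recover coordinate representations of each $z_i$ by enumerating the $O_\DC(n)$ points of $P$, set the columns of $A$ to those coordinates and $s$ to the GAP generators, then verify the equivalence by the same direct expansion and invoke \Cref{obs:bigL-bound} for the $\Delta$ bound. Your hashing-based inversion is a slightly cleaner implementation of what the paper calls ``exhaustive search of $P$'', but the argument is otherwise identical.
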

    \begin{proof}
    Fix an instance of $\DC$-Subset Sum given by an integer set $Z$ satisfying
    $|Z + Z| \leq \DC |Z|$ and an integer target $t$. Apply
    \Cref{thm:Freiman-constructive}, which fails with probability $n^{-\gamma}$
    and otherwise produces a GAP
    \[
        P = \{y_1\ell_1 + y_2\ell_2 + \cdots + y_d\ell_d \; : \; \forall i, \ell_i \in [L_i]\}
    \]
    of dimension $d \coloneqq d(\DC)$ and volume $v(\DC)n$ containing $Z$. 

    For each $z_i \in Z$, let $v(z_i) = (v_1, v_2, \dots, v_d)$
    be an arbitrary $d(\DC)$-dimensional integer vector satisfying
    \begin{align*}
        y_1v_1 + y_2v_2 + \dots y_dv_d &= z_i \text{ and } \\
        \forall i \in [d], v_i &\in [L_i].
    \end{align*}
    We can think of $v(z_i)$ as the $d$-dimensional ``GAP coordinates'' of the input element $z_i$. $v(z_i)$ is guaranteed to exist by Freiman's theorem, and we can recover it in time $O(|P|) = O_\DC(n)$ via exhaustive search of $P$. (However, $v(z_i)$ is not guaranteed to be unique.)
    
    To complete the reduction, set 
    \begin{align*}
        A &\in \Z^{d(\DC) \times n} \text{ with } \forall j \in [n], A[\cdot, j] = v(z_j), 
    \end{align*}
    set $s \coloneqq (y_1, y_2, \dots, y_d)$ and preserve the same target $t$. Note that $\norm{A}_\infty \leq n^{2 / d(\DC)}$ without loss of generality by \Cref{obs:bigL-bound}.

    We claim that for any binary vector $x = (x_1, x_2, \dots, x_n) \in \{0, 1\}^n$,
    \begin{equation}
        \label{eq:equivalence-SS-HBILP}
        \langle Ax, s \rangle = \sum_{i \; : \; x_i = 1} z_i.
    \end{equation}
    To see this, observe that
    \begin{align*}
        \langle Ax, s \rangle &= \sum_{i \in [d]} y_i \sum_{x_j = 1 } A[i, j] \\
        &= \sum_{x_j = 1} y_1A[1, j] + y_2 A[2, j] + \dots y_d A[d,j] \\
        &= \sum_{x_j = 1} \langle y, v(z_j) \rangle \\
        &= \sum_{j \; : \; x_j = 1} z_j.
    \end{align*}
    Thus $\langle Ax, s \rangle = t$ if and only if $\sum_{i \; : \; x_i = 1} z_i = t$, and there is a one-to-one correspondence between solutions to our $\DC$-Subset Sum instance and our HBILP feasibility instance.
    \end{proof}

\subsection{Equivalence Between HBILP Feasibility and Subset Sum}
\label{subsec:HBILP-to-SS}

\GBILequivalence*

\Cref{thm:HBILP-SS-equivalence} follows immediately from the next two lemmas, which show reductions in both directions. The first is a consequence of the reduction in \Cref{subsec:SS-to-HBILP}:

\begin{lemma}
    If HBILP Feasibility can be solved in time $\Delta^{O(m)} \cdot
    O_m(\poly(n))$, then $\DC$-Subset Sum can be solved in time $O_\DC(\poly(n))$ with success probability $1 - n^{-\gamma}$ for an arbitrarily large constant $\gamma > 0$.
    \label{lem:SS-to-HBILP}
\end{lemma}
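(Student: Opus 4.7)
The plan is to apply the reduction from \Cref{lem:SS-to-HBILP-reduction} in a black-box fashion and then carefully track how the parameters of the output HBILP instance translate through the assumed HBILP algorithm's runtime bound. Given a $\DC$-Subset Sum instance $(Z, t)$ on $n$ elements, I would first invoke \Cref{lem:SS-to-HBILP-reduction} to produce, in time $\tilde{O}_\DC(n)$ and with failure probability at most $n^{-\gamma}$, an equivalent HBILP instance with constraint matrix $A \in \Z^{d(\DC) \times n}$, step vector $s \in \Z^{d(\DC)}$ and target $t$. The crucial features of this output are that the number of rows is $m = d(\DC)$, a function of $\DC$ only, and that $\Delta = \norm{A}_\infty \leq n^{2/d(\DC)}$ by \Cref{obs:bigL-bound}.

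Next, I would simply feed this HBILP instance into the hypothesized algorithm running in time $\Delta^{O(m)} \cdot O_m(\poly(n))$. Substituting the parameters of our reduced instance, this runtime becomes
\[
    \bigl(n^{2/d(\DC)}\bigr)^{O(d(\DC))} \cdot O_{d(\DC)}(\poly(n)) = n^{O(1)} \cdot O_\DC(\poly(n)) = O_\DC(\poly(n)),
\]
where the key cancellation is that the exponent $d(\DC)$ in $\Delta^{O(m)}$ cancels with the $1/d(\DC)$ factor hiding inside $\Delta$, leaving a polynomial in $n$ whose exponent is an absolute constant and prefactors depending only on $\DC$. The answer returned for the HBILP instance is, by the correspondence \eqref{eq:equivalence-SS-HBILP} established in the proof of \Cref{lem:SS-to-HBILP-reduction}, a correct answer for the original $\DC$-Subset Sum instance.

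The reduction is randomized with failure probability $n^{-\gamma}$ coming from the application of \Cref{thm:Freiman-constructive} inside \Cref{lem:SS-to-HBILP-reduction}; since $\gamma$ can be chosen to be an arbitrarily large constant, the overall procedure succeeds with the required probability. There is no real obstacle here beyond checking that the bookkeeping of constants in the exponent works out: the only thing to verify carefully is that $\Delta^{O(m)}$ remains $n^{O(1)}$ rather than $n^{O_\DC(1)}$, which hinges precisely on the $L_i \leq n^{2/d(\DC)}$ bound from \Cref{obs:bigL-bound}. If one only had the weaker a priori bound $L_i \leq v(\DC) n$, the substitution would yield $n^{O_\DC(1)}$, which is still acceptable for the statement of the theorem; so the lemma is genuinely robust.
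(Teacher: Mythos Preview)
Your proposal is correct and matches the paper's proof almost exactly: apply \Cref{lem:SS-to-HBILP-reduction}, then plug the resulting parameters $m = d(\DC)$ and $\Delta \le n^{2/d(\DC)}$ into the hypothesized HBILP algorithm and observe the exponents cancel. The only step present in the paper that you omit is an initial Frank--Tardos--type preprocessing to ensure all input integers are bounded by $2^{\poly(n)}$, which the paper invokes so that the word-RAM assumptions (and hence the $\tilde{O}_\DC(n)$ bound for the reduction) are justified; you may want to add that sentence for completeness.
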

\begin{proof}
    In polynomial time (in the size of the input), we can preprocess an instance
    of Subset Sum and produce an equivalent one such that all integers are bounded by $2^{\poly(n)}$ (see, e.g.,
    ~\cite{frank1987application,doi:10.1137/060668092}).\footnote{See discussion
    about the computational model in~\Cref{sec:prelims}.}
    Next, we use the reduction given in \Cref{lem:SS-to-HBILP-reduction}, which takes time $\tilde{O}_\DC(n)$ and succeeds with probability $1 - n^{-\gamma}$, and solve the resulting HBILP instance in time 
    \[
        \Delta^{O(m)} \cdot O_m(\poly(n)) = (n^{2/d(\DC)})^{O(d(\DC))} \cdot
        O_\DC(\poly(n)) = O_\DC(\poly(n)). \qedhere
    \]
\end{proof}

\begin{lemma}
    If $\DC$-Subset Sum admits an $O_\DC(\poly(n))$-time algorithm, HBILP
    Feasibility can be solved in time $\Delta^{O(m)} \cdot O_m(\poly(n))$.
    \label{lem:HBILP-to-SS}
\end{lemma}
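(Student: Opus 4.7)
The plan is a direct reduction from HBILP feasibility to $\DC$-Subset Sum via the linear functional $v \mapsto \langle v, s \rangle$. Given an HBILP instance $(A, s, t)$ with $A \in \Z^{m \times n}$, I would define $w_j \coloneqq \langle A[\cdot, j], s \rangle$ for each $j \in [n]$. By bilinearity, $\langle Ax, s \rangle = \sum_{j=1}^n x_j w_j$ for every $x \in \{0,1\}^n$, so $x$ is feasible for the HBILP instance if and only if the corresponding sub-multiset of $\{w_1, \dots, w_n\}$ sums to $t$. The reduction itself is syntactically immediate; the content of the proof is in showing that the resulting instance has small doubling.

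The main step is to bound the doubling constant of the projected values. Each $w_j$ lies in the set
\begin{equation*}
    I \coloneqq \{\langle v, s \rangle \, : \, v \in [-\Delta, \Delta]^m\} = \{\ell_1 s_1 + \ell_2 s_2 + \dots + \ell_m s_m \, : \, \ell_i \in [-\Delta, \Delta]\},
\end{equation*}
which is a generalized arithmetic progression generated by $s_1,\dots,s_m$ of dimension at most $m$ and volume at most $(2\Delta+1)^m$. Hence $I + I \subseteq \{\langle v, s \rangle \, : \, v \in [-2\Delta, 2\Delta]^m\}$ has cardinality at most $(4\Delta+1)^m$. Letting $W$ denote the set of distinct values in $\{w_1, \dots, w_n\}$, the containment $W \subseteq I$ gives $|W + W| \leq |I + I| \leq (4\Delta+1)^m$, so $\DC(W) \leq (4\Delta+1)^m = \Delta^{O(m)}$.

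I would then invoke the hypothetical $O_\DC(\poly(n))$-time algorithm for $\DC$-Subset Sum on the instance $(W, t)$ with the appropriate multiplicities. Substituting the bound $\DC \leq \Delta^{O(m)}$ yields a total runtime of the form $f(\Delta^{O(m)}) \cdot \poly(n)$ for the function $f$ supplied by the hypothesis, which matches the claimed bound $\Delta^{O(m)} \cdot O_m(\poly(|\mathcal{I}|))$ (the dependence of $f$ on $m$ being absorbed into the $O_m(\cdot)$ factor). The main technical obstacle I expect is handling duplicates: several columns may project to the same value, so the Subset Sum instance is naturally a multiset, whereas Problem~\ref{prob:KSS} is formally stated over an integer set. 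I would either appeal to the standard convention that Subset Sum oracles accept multiplicities (with the doubling constant computed on the distinct values, as above), or preprocess each $v \in W$ of multiplicity $k$ into an $O(\log k)$-sized block $\{v, 2v, 4v, \ldots\}$ via the textbook binary-encoding trick and verify that this introduces at most a $\polylog(n)$ factor to the doubling constant, leaving the overall bound unchanged.
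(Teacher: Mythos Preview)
Your reduction is syntactically correct, but the runtime analysis has a genuine gap. You bound $\DC(W) \leq (4\Delta+1)^m = \Delta^{O(m)}$ and then invoke the hypothetical $O_\DC(\poly(n))$ algorithm to obtain $f(\Delta^{O(m)}) \cdot \poly(n)$, claiming this ``matches'' $\Delta^{O(m)} \cdot O_m(\poly(|\mathcal{I}|))$. It does not. The hypothesis places no constraint on $f$ beyond computability, so $f(\DC)$ could be $2^{2^{\DC}}$ or worse; with $\DC = \Delta^{\Theta(m)}$ this yields something like $2^{2^{\Delta^m}} \cdot \poly(n)$, which is nowhere near $\Delta^{O(m)}$. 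The dependence on $\Delta$ cannot be ``absorbed into the $O_m(\cdot)$ factor'' because $\Delta$ is not a function of $m$.

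The paper avoids this by arranging the Subset Sum instance so that its doubling constant is $O_m(1)$ --- a function of $m$ alone, independent of $\Delta$ --- while the $\Delta^{O(m)}$ factor appears instead in the \emph{size} of the instance. Concretely, after first appending $\lceil \log_\Delta n \rceil$ extra rows to make all projections $\langle A'[\cdot,j], s' \rangle$ distinct (which also cleanly disposes of the duplicate issue you flag), the paper observes that the resulting set $Z$ lies inside a GAP $Y$ of dimension $m+1$ and volume $\Delta^{O(m)} \cdot n$, and then \emph{pads} $Z$ with dummy elements from the translate $t + Y$ until $|Z| = \Theta(|Y|)$. Now $|Z+Z| \leq |Y+Y| \leq 2^{m+1}|Y| = O_m(|Z|)$, so $\DC(Z) = O_m(1)$, and the assumed algorithm runs in time $O_m(\poly(|Z|)) = O_m(\poly(\Delta^{m} n)) = \Delta^{O(m)} \cdot O_m(\poly(n))$. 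This padding trick --- trading instance size for doubling constant --- is precisely the idea your argument is missing.
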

\begin{proof}
    Fix an instance of HBILP Feasibility given by the matrix $A \in \Z^{m \times
    n}$, the vector $s \in \Z^m$, and the integer target $t$. Let $\Delta
    \coloneqq \norm{A}_\infty$. 

    We perform the reduction in two steps. First, we self-reduce our HBILP instance to another HBILP instance $A', s', t'$ with the property that every column $A'[\cdot, j]$ of $A'$ has a unique dot product $\langle A'[\cdot, j], s' \rangle$. We then reduce $A', s', t'$ to $\DC$-Subset Sum.

    If $A$ contains any column with only zeroes, then the value of the corresponding entry of $x$ does not matter, and we can safely delete it. Thus we can assume without loss of generality that each column of $A$ has at least one nonzero entry. Moreover, by~\Cref{obs:HBILP-non-negative}, proved in \Cref{apx:ilp-manipulation}, we can assume that each entry of $A$ is non-negative and that any solution vector $x$ has fixed support exactly $q$ for some $q = \Theta(n)$.
    
    \vspace{0.3cm}
    \noindent
    \textbf{Step 1: Self-reduction.} In order to construct the instance $A', s',
    t'$, define $M \coloneqq nm\Delta \norm{s}_\infty + 1$, which satisfies 
    \begin{equation}
        \label{eq:M-lb}
        M > \langle Ax, s \rangle
    \end{equation} 
    for any $x \in \{0, 1\}^n$ by construction. Moreover, let $k \coloneqq \lceil \log_\Delta(n) \rceil$.
    
    Let $R \in [\Delta]^{k \times n}$ be the matrix whose columns are vectors in $[0:\Delta-1]^k$ in lexicographically increasing order. Because the number of such
    vectors is at least $n$, every column of $R$ is different.
    Recall that $J_{k\times n}$ denotes the $k \times n$ matrix containing only 1's and let $\bar{R}$ be $\Delta \cdot J_{k\times n}- R$. 

    Create the block matrix $A' \in \Z_{\ge 0}^{ (m+k) \times 2n}$ as follows. The top-left block is $A$, the bottom-left block is $R$, the bottom-right block is $\bar{R}$ and each entry in the top-right block is $0$. Observe that every column in $\tilde{A}$ is distinct because each column in $R$ is distinct and no column in $A$ is all 0's. 

    Create $s' \in \Z_{\ge 0}^{m + k}$ as follows. The first $m$ entries
    of $s'$ are $s$, and the remaining $k$ entries are the vector $v =
    (M\Delta^0, M \Delta^1, \dots, M \Delta^{k-1})$. Finally, set $t' \coloneqq t + q
    \Delta \norm{v}_1$ to complete the reduction.
    \[
	A' \coloneqq 
	\begin{pmatrix}
		\makebox(1.5cm,1.5cm){ $A$}
		& \rvline & \makebox(1.5cm,1.5cm){\bigzero} \\
		\hline
		\makebox(1.5cm,0.5cm){$R$} & \rvline &
		\makebox(1.5cm,0.5cm){$\bar{R}$} 
	\end{pmatrix}
	\smallskip
	\,\,\,
	s' \coloneqq
	\begin{pmatrix}
		\makebox(1.5cm,1.5cm){ $s$ } \\
		\hline
		\makebox(1.7cm,0.5cm){$v$} \\
	\end{pmatrix}
    \]

    \begin{claim}
        \label{claim:Aprime-products-distinct}
        For every distinct pair of indices $i, j \in [2n]$, $\langle A'[\cdot, i], s' \rangle \neq \langle A'[\cdot, j], s' \rangle$.
    \end{claim}
    \begin{claimproof}
        Begin with the first $n$ columns. For all $i \in [n]$, we can break down the relevant dot product into two pieces corresponding to the top and bottom portions of $A'$: 
        \[
            \langle A'[\cdot, i], s' \rangle = \langle A[\cdot, i], s \rangle + \langle R[\cdot, i] \cdot v \rangle.
        \]
        
        First, observe that $\langle R[\cdot, i], v \rangle$ is distinct for every $i \in [n]$ by construction, due to the fact that each component of $R[\cdot, i]$ is less than $\Delta$, and the components of $v$ increase by factors of $\Delta$. 
        
        Second, because
        \[
            0 < \langle A[\cdot,  i], s\rangle \leq M,
        \]
        the $\langle A[\cdot,  i], s\rangle$ term of the dot product $\langle A'[\cdot,  i], s'\rangle$ is not large enough to interfere with the $\langle R[\cdot, i], v \rangle$ term, and thus the first $n$ columns of $A'$ have distinct dot products with $s'$.
        
        Because $\bar{R} = \Delta \cdot J_{k \times n} - R$, and because no column of $A$ consists of all $0$'s by assumption, similar arguments show that the value $\langle A'[\cdot, i], s' \rangle$ is distinct for every column $i \in [2n]$.
    \end{claimproof}
    
    \begin{claim}
        \label{claim:Aprime-selfreduction}
        The ILP instance $(A', s', t')$ has a solution if and only if the
        instance $(A, s, t)$ has a solution (and the solution to $A, s, t$ can
        be recovered efficiently from the solution of $A', s', t'$). 
    \end{claim}
    \begin{claimproof}
        Suppose $x$ satisfies $\langle Ax, s \rangle = t$. Recall that $x$ has support exactly $q$ by \Cref{obs:HBILP-non-negative} without loss of generality. Thus the vector $x'$ created by concatenating two copies of $x$ satisfies 
        \[
            \langle A'x', s' \rangle = t + q \Delta \norm{v}_1 = t'.
        \]
        Moreover, any vector $y' \in \{0,1\}^{2n}$ that satisfies $\langle A'y', s' \rangle = t'$ must satisfy
        \[
            \langle A(y'_1, y'_2, \dots, y'_n), s \rangle = t
        \]
        by construction. This is because the $[R \mid \bar{R}]$ submatrix of $A'$ can contribute to $t'$ only in multiples of $M$, so because $\langle A(y'_1, y'_2, \dots, y'_n), s \rangle < M$ by \eqref{eq:M-lb}, this product must evaluate to $t$.
    \end{claimproof}
    
    \vspace{0.3cm}
    \noindent
    \textbf{Step 2: Reduction to $\DC$-Subset Sum.}    
    Consider the integer vector 
    \begin{equation}
        \label{eq:z-def}
        z \coloneqq (\langle s', A'[\cdot, 1] \rangle, \langle s', A'[\cdot, 2] \rangle, \dots, \langle s', A'[\cdot, 2n] \rangle)
    \end{equation}
    and let 
    \[
        Z = \{ z_1, z_2, \dots, z_{2n} \}
    \]
    denote the set containing the components of $z$. (Note that $Z$ is a proper set and contains no duplicates, by \Cref{claim:Aprime-products-distinct}.) We proceed to consider $Z, t$ as an instance of Subset Sum.

    Because $\langle A'x, s' \rangle = t'$ if and only if $\langle x, z \rangle =
    t'$ by construction \eqref{eq:z-def}, we have a one-to-one correspondence between solutions
    to our Subset Sum and HBILP Feasibility instances: any subset of $Z$ that
    adds to $t'$  corresponds to a binary vector $x \in \{0, 1\}^{2n}$ such that
    $\langle A'x, s' \rangle = t'$, which can be used to recover a solution for the original instance $A$, $s$, $t$ by \Cref{claim:Aprime-selfreduction}. It remains to show that an $O_\DC(\poly(n))$
    algorithm for $\DC$-Subset Sum will allow us to solve the problem in the
    claimed time.
    
    We begin by bounding the doubling constant $\DC$ of $Z$. By the definition of $z$, we have that $z_j = \langle s, A'[\cdot, j] \rangle$ for all $j \in [n]$, and thus $Z$ is a subset of the GAP
    \[
        Y \coloneqq \{y_1s_1 + y_2s_2 + \dots + y_ms_m + y_{m+1}M\; : \; -\Delta \leq y_i \leq \Delta, \forall i \in [m]; 0 < y_{m+1} < \Delta^{k-1} \}.
    \]
    Note here that the dimension of $Y$ is $m+1$ instead of $m+k$, as we have chosen to represent the component of each $z_j \in Z$ divisible by $M$ into a single large dimension.

    We claim that we can assume 
    \begin{equation}
        \label{eq:Z-Omega-Y}
        |Z| = \Omega(|Y|)
    \end{equation}
    without loss of generality. To see this, observe that we can inflate $|Z|$ by adding up to $|Y|$ dummy elements from the translated GAP $t + Y$. Because every such element is greater than $t$, and each is contained in a translation of $Y$, we create no additional solutions and increase $|Y+Y|$ by at most a factor of 2.
    
    We have that 
    \begin{align*}
        |Z + Z| &\leq |Y + Y| \\
        &\leq 2^{m+1} \cdot |Y| \\
        &= O_m(1) \cdot |Z|,
    \end{align*}
    where the first line follows from the fact that $Z \subseteq Y$, the second line follows from the fact that $|Y|$ has dimension $m+1$, and the third follows from \eqref{eq:Z-Omega-Y}.

    Thus $Z, t$ is an instance of $O_m(1)$-Subset Sum whose solutions correspond directly to solutions of our original HBILP feasibility instance. Also, $|Z| = O(|Y|) = \Delta^{O(m) + k}$.
    Because $\Delta^k = O(n)$ by the definition of $k$, an algorithm for Subset Sum that runs in time $O_\DC(\poly(n))$ solves $Z, t$ in time $O_m(\poly(\Delta^m \cdot n)) = \Delta^{O(m)} \cdot O_m(\poly(n))$ as claimed.
\end{proof}

\subsubsection{Reduction from BILP Feasibility to HBILP Feasibility}
\label{subsubsec:BILP-to-HBILP}

An \FPT~algorithm for $\DC$-Subset Sum further implies a $\Delta^{O(m)} \cdot O_m(\poly(n))$ algorithm for Bounded ILP feasibility, i.e., an extension of Eisenbrand and Weismantel's improvement for Unbounded ILPs to determining feasibility for Bounded ILPs.

\begin{corollary}
    \label{cor:bilp-lb}
    If $\DC$-Subset Sum can be solved in $O_\DC(\poly(n))$, then Bounded ILPs
    defined by $A \in \Z^{m \times n}$, $b \in \Z^m$ with $\Delta \coloneqq \norm{A}_\infty$ and each variable $x_i$ bounded by $\poly(n)$ can be solved in time $\Delta^{O(m)} \cdot O_m(\poly(n))$.
\end{corollary}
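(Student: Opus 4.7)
The plan is to chain two reductions. First, I would invoke the assumption via \Cref{lem:HBILP-to-SS}: since $\DC$-Subset Sum admits an $O_\DC(\poly(n))$ algorithm, HBILP Feasibility can be solved in time $\Delta^{O(m)} \cdot O_m(\poly(n))$. It then remains to reduce Bounded ILP Feasibility, in polynomial time, to an HBILP Feasibility instance whose column dimension is $\poly(n)$, whose entry magnitude is at most $\Delta$, and whose row dimension is $m$.

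For the reduction itself I would proceed in two steps. The first step converts the Bounded ILP $(A, b)$ with $x_i \in [\ell_i : u_i]$ and $|\ell_i|, |u_i| = \poly(n)$ into a Binary ILP, following \Cref{remark:bounded-to-binary}. I shift each variable so that $x_i' \coloneqq x_i - \ell_i \in [0 : u_i - \ell_i]$ (updating the right-hand side to $b' \coloneqq b - A\ell$), and then replace each $x_i'$ by a sum $\sum_{j=1}^{u_i - \ell_i} y_{i,j}$ of binary variables, which duplicates column $A[\cdot, i]$ exactly $u_i - \ell_i$ times. This yields a Binary ILP $A' y = b'$ with $n' = \poly(n)$ variables, the same $m$ constraints, and the same entry magnitude $\Delta$.

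The second step collapses the $m$ row constraints into a single hyperplane constraint via a base-$M$ encoding. Choose $M = n' \Delta + \norm{b'}_\infty + 1 = \poly(n, \Delta)$, which strictly exceeds $\norm{A' y - b'}_\infty$ for every $y \in \{0,1\}^{n'}$. Set $s \coloneqq (1, M, M^2, \dots, M^{m-1}) \in \Z^m$ and $t \coloneqq \langle b', s \rangle$. A standard base-$M$ argument shows that $\langle A' y, s \rangle = t$ forces each coordinate of $A' y - b'$ to vanish, so $(A', s, t)$ is an HBILP Feasibility instance equivalent to $(A', b')$. Running the HBILP algorithm of the first paragraph on it takes time $\Delta^{O(m)} \cdot O_m(\poly(n')) = \Delta^{O(m)} \cdot O_m(\poly(n))$, and a binary solution translates back to a Bounded ILP solution via the inverse of the shift-and-duplicate substitutions.

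The main obstacle is bookkeeping rather than ideas: I must verify that both reduction steps preserve the two parameters controlling the target runtime, namely (i) the entry magnitude $\Delta$ of the HBILP matrix, which governs the $\Delta^{O(m)}$ factor, and (ii) polynomial dependence on $n$, which is compatible with $n' = \poly(n)$. Column duplication does not alter matrix entries, and the hyperplane collapse only enlarges the step vector $s$, which is unbounded in the HBILP definition (\Cref{prob:planar-ILP}); neither step changes $\Delta$ or $m$. One has to be slightly careful about signs in the base-$M$ argument, since $A' y - b'$ can have negative coordinates, but taking $M$ strictly larger than $\norm{A'y - b'}_\infty$ handles this uniformly.
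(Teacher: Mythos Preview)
Your proposal is correct and follows the same route as the paper: invoke \Cref{lem:HBILP-to-SS} to get the HBILP algorithm, reduce Bounded ILP to Binary ILP via \Cref{remark:bounded-to-binary}, and then collapse the $m$ constraints into a single hyperplane constraint by a base-$M$ encoding, which is exactly the content of \Cref{lem:BILP-to-HBILP}. The only cosmetic difference is that the paper first normalizes $A$ to have nonnegative entries (via \Cref{obs:non-negative}) before the base-$M$ step, whereas you absorb possible signs directly by choosing $M > \norm{A'y - b'}_\infty$; both variants work.
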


\Cref{cor:bilp-lb} is a straightforward corollary of \Cref{lem:BILP-to-HBILP},
which reduces ILP Feasibility with binary variables to HBILP feasibility, and
the fact that ILPs with polynomially bounded variables can be reduced to binary ILPs (\Cref{remark:bounded-to-binary}). We defer the proof of \Cref{lem:BILP-to-HBILP} to \Cref{apx:ilp-manipulation}.

\begin{lemma}
    \label{lem:BILP-to-HBILP}
    If HBILP Feasibility can be solved in time $\Delta^{O(m)} \cdot O_m(\poly(n))$, Binary ILP Feasibility can be solved in time $\Delta^{O(m)} \cdot O_m(\poly(n))$. 
\end{lemma}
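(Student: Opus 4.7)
The plan is to encode the $m$ equality constraints of a Binary ILP instance as a single hyperplane constraint via a base-$B$ positional encoding. Given $A \in \Z^{m \times n}$ and $b \in \Z^m$ with $\Delta = \norm{A}_\infty$, I would first preprocess by rejecting the instance if any $|b_i| > n\Delta$: since $|(Ax)_i| \leq n\Delta$ for every $x \in \{0,1\}^n$, any such row is unsatisfiable. I would then choose a base $B$ strictly larger than $4n\Delta$ (for concreteness, $B = 4n\Delta + 1$) and construct the HBILP instance with the \emph{same} matrix $A$, step vector $s = (1, B, B^2, \ldots, B^{m-1})$, and target $t = \langle b, s \rangle$.

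For correctness, the forward direction is immediate: $Ax = b$ implies $\langle Ax, s \rangle = \langle b, s \rangle = t$. For the backward direction, set $v = Ax - b$, so that preprocessing guarantees $|v_i| \leq 2n\Delta < B/2$ componentwise. It then suffices to show that $\sum_{i=0}^{m-1} v_i B^i = 0$ together with $|v_i| < B/2$ forces $v = 0$. If $v \neq 0$, let $k$ be the largest index with $v_k \neq 0$; then $|v_k B^k| \geq B^k$, while the lower-order terms satisfy
\[
    \Bigl|\sum_{i < k} v_i B^i\Bigr| \leq \frac{B}{2} \cdot \sum_{i=0}^{k-1} B^i \;=\; \frac{B(B^k-1)}{2(B-1)} \;<\; B^k,
\]
contradicting $\langle v, s \rangle = 0$. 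Hence $\langle Ax, s \rangle = t$ if and only if $Ax = b$.

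For the runtime, the reduction preserves $A$, so $\norm{A}_\infty = \Delta$ is unchanged, and the HBILP instance retains the same $m$ and $n$. The step vector $s$ and target $t$ have bit-length $O(m \log(n\Delta))$, so $|\mathcal{I}|$ remains polynomial in the original input. A single call to the hypothesized HBILP solver then runs in $\Delta^{O(m)} \cdot O_m(\poly(n))$, matching the claimed bound, with only polynomial-time preprocessing overhead.

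The only real subtlety, which I expect to be the main obstacle, is handling the possibly negative entries of $v = Ax - b$: in the usual base-$B$ representation the digits lie in $[0, B-1]$, but after subtracting $b$ the effective digits are signed and a naive choice of $B$ could allow cancellations across digits. The geometric-series bound above shows that the positional encoding remains injective at zero provided every digit has magnitude strictly less than $B/2$, which is exactly why the threshold $B > 4n\Delta$ is tight against the worst-case bound $|v_i| \leq 2n\Delta$. Everything else in the proof is mechanical once this encoding is set up.
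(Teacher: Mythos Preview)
Your proposal is correct and follows essentially the same base-$B$ positional encoding as the paper's proof. The only minor difference is that the paper first preprocesses $A$ to have non-negative entries (via \Cref{obs:non-negative}) and then uses base $q = n\Delta + 1$, whereas you handle signed digits directly by taking $B > 4n\Delta$ and invoking a geometric-series bound; the core reduction is identical.
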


\section{Unbounded Subset Sum with Constant Doubling}
\label{sec:dc-uss}

$\DC$-Unbounded Subset Sum is equivalent to an unbounded integer program with a single constraint. In this section, we prove a near-XP algorithm for $\DC$-Unbounded Subset Sum by first using the constructive Freiman's theorem to map instances to integer programs with small coefficients, then using existing methods to find small-support solutions to the integer programs.
The proof of the lemma uses techniques that are standard in the literature (see,
e.g.,~\cite{DBLP:journals/orl/EisenbrandS06}); nevertheless, we are not aware of a prior proof of the following statement.

\begin{lemma}[ILP Solutions with small support]
    \label{lem:candidates}
    Let $A \in \Z^{m \times n}$ with $\Delta \coloneqq \norm{A}_\infty$. In $(n\Delta)^{O(m)}$ time we can find a set $\mathcal{X}
    \subseteq \{0,1\}^n$ with the following property:
    For any target vector $b \in \Z^m$ corresponding to at least one solution $x \in \ZZ^n$ with $A x = b$, there exists a small-support solution $y \in \ZZ^n$ satisfying
    \begin{displaymath}
        A y = b, \;\; \supp(y) \in \mathcal{X} \;\; \text{ and } \;\; |\supp(y)| \le m \log_2(2n\Delta+1).
    \end{displaymath}
\end{lemma}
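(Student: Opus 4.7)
The plan is to prove the cardinality bound by a pigeonhole argument in the Eisenbrand--Shmonin tradition and then enumerate candidate supports after canonicalizing solutions by column value.

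First, I would establish the cardinality bound $|\supp(y)| \le m \log_2(2n\Delta+1)$ by iterated reduction. Suppose $y \in \ZZ^n$ is a solution to $Ay = b$ with $|\supp(y)| > m \log_2(2n\Delta+1)$. Consider the $2^{|\supp(y)|}$ subset sums $\sigma_T := \sum_{j \in T} A[\cdot, j]$ indexed by $T \subseteq \supp(y)$. Each $\sigma_T$ lies in $[-n\Delta, n\Delta]^m$, a box containing at most $(2n\Delta+1)^m$ integer vectors, so by pigeonhole there exist distinct $T_1, T_2$ with $\sigma_{T_1} = \sigma_{T_2}$. Replacing each $T_i$ by $T_i \setminus (T_1 \cap T_2)$ we may assume $T_1 \cap T_2 = \emptyset$ and $\sum_{j \in T_1} A[\cdot, j] = \sum_{j \in T_2} A[\cdot, j]$. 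Setting $\alpha := \min_{j \in T_1} y_j \ge 1$ and updating $y_j \gets y_j - \alpha$ on $T_1$ and $y_j \gets y_j + \alpha$ on $T_2$ preserves both $y \ge 0$ and $Ay = b$ while strictly reducing $|\supp(y)|$. Iterating yields the claimed bound.

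Second, I would construct $\mathcal{X}$ by canonicalizing through column values. Partition the columns of $A$ by their vector value in $[-\Delta,\Delta]^m$; there are at most $(2\Delta+1)^m$ equivalence classes, so designating one representative index per class yields a set $R \subseteq [n]$ with $|R| \le (2\Delta+1)^m$. If $y$ is a minimum-support solution and $j_1, j_2 \in \supp(y)$ share a column value, we can set $y_{j_1} \gets y_{j_1} + y_{j_2}$ and $y_{j_2} \gets 0$, contradicting minimality; hence $\supp(y) \subseteq R$. Taking $\mathcal{X}$ to be the indicator vectors of all subsets of $R$ of cardinality at most $k := m \log_2(2n\Delta+1)$, enumerated by a straightforward DFS over $R$, covers every feasible $b$.

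The delicate point is matching the $(n\Delta)^{O(m)}$ runtime: the naive enumeration gives $|R|^k \le (2\Delta+1)^{m \cdot m \log(2n\Delta+1)}$, which exceeds the target. The sharpening I would attempt is to index candidate supports $S$ by the bounded vector $A \mathbf{1}_S \in [-k\Delta, k\Delta]^m$ and retain one canonical $S$ per value, capping $|\mathcal{X}|$ by $(2k\Delta+1)^m = (n\Delta)^{O(m)}$, and then argue that for any feasible $b$ some such canonical $S$ still admits a nonnegative integer solution to $Ay = b$. Justifying this last covering property, rather than merely covering those $b$'s achievable with $0/1$ multiplicities on $S$, is the main technical hurdle I expect.
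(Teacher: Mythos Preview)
Your pigeonhole argument for the support bound is correct and standard; the paper proves the same bound but phrases it via lexicographic minimality rather than iterated reduction, and that phrasing turns out to be the key to closing your gap. Your column-canonicalization step (distinct column values, representatives $R$) is valid but ultimately unnecessary.

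The gap you flag is real, and an arbitrary ``one canonical $S$ per value of $A\mathbf{1}_S$'' does not work: two supports $S_1,S_2$ with $A\mathbf{1}_{S_1}=A\mathbf{1}_{S_2}$ can generate different nonnegative cones, so keeping the wrong one loses feasible $b$'s. The paper's fix is to make \emph{lex-min} do double duty. For each $\hat b$ with $\|\hat b\|_\infty < n\Delta$, store in $\mathcal{X}$ the support of the lexicographically minimal $x\in\ZZ^n$ with $Ax=\hat b$. The covering claim is then: if $z$ is the lex-min solution to $Az=b$ for an arbitrary feasible $b$, and $\hat z\in\{0,1\}^n$ is its support indicator, then $\hat z$ is itself the lex-min solution to $Ax=A\hat z$ (note $\|A\hat z\|_\infty<n\Delta$). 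Proof: if some $\hat y\in\ZZ^n$ with $A\hat y=A\hat z$ satisfied $\hat y\lelex\hat z$, then $y:=z-\hat z+\hat y$ would lie in $\ZZ^n$ (since $\supp(\hat z)\subseteq\supp(z)$), satisfy $Ay=b$, and have $y\lelex z$, a contradiction. Hence $\supp(z)=\supp(\hat z)\in\mathcal{X}$. Note also that the paper does not enumerate subsets at all: it loops over the $(2n\Delta+1)^m$ candidate $\hat b$'s and, for each, computes the lex-min solution by a single call to the Eisenbrand--Weismantel ILP algorithm with objective $\sum_i x_i M^i$, costing $(n\Delta)^{O(m)}$ per call. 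Your DFS over subsets of $R$ would already take $(n\Delta)^{O(m^2)}$ time before any filtering, so even with the right canonical choice that enumeration strategy would miss the target runtime.
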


\begin{proof}
We begin with a bound on the support of lexicographically minimal solutions that follows standard arguments. 
\begin{claim}
    Let $A \in \Z^{m \times n}$ with $\Delta = \norm{A}_\infty$,  and 
    let $y \in \ZZ^n$ be the lexicographically minimal vector such that $A y = b$
    for some $b \in \Z^m$. Then $|\supp(y)| \le m \log_2 (2n\Delta+1)$.
\end{claim}
\begin{claimproof}
    Assume for contradiction that 
    \[
        2^{|\supp(y)|} > (2 n \Delta+1)^m.
    \]
    Because $Ax \leq (2n\Delta + 1)^m$ for any $x \in \{0,1\}^n$, there must exist two different vectors $v,w \in \{0,1\}^n$ such that (i) $\supp(v),\supp(w) \subseteq
    \supp(y)$, and (ii) $A v = A w$, by the pigeonhole principle. 
    
    Let $y_1 = y - w + v$ and $y_2 = y + w - v$. Observe that $A
    y_1 = A y_2$ and $y_1,y_2 \in \ZZ^n$ because $\supp(v),\supp(w) \subseteq
    \supp(y)$. Moreover, because $v \neq w$ we have that one of $y_1$ or
    $y_2$ is lexicographically smaller than $y$, which contradicts the
    assumption that $y$ is lexicographically minimal.
\end{claimproof}

Let $\mathcal{X} \subseteq \{0,1\}^n$ be the set of lexicographically minimal solutions to $Ax = b$ for every
$b \in \ZZ^n$ with $\norm{b}_\infty < n\Delta$.
Clearly, $|\mathcal{X}| \le (2n\Delta + 1)^m$ as this
is the number of suitable $b$'s. To construct $\mathcal{X}$ it remains to
iterate over every $b \in \ZZ^n$ with $\norm{b}_\infty < n\Delta$ and solve the
following Integer Linear Program:
\begin{displaymath}
    \max \left\{ \sum_{i=1}^n x_i \cdot M^i \mid Ax = b, x \in \ZZ^n\right\},
\end{displaymath}
where $M = 4n\Delta$. Note that this can be solved in $(n\Delta)^{O(m)}$ time
by~\cite[Theorem 2.3]{eisenbrand2019proximity} for each $b$. Hence, the set $\mathcal{X}$ can be constructed in the claimed time.
Finally, it remains to show that for any feasible $b$, there exists a solution $y$ with small support in $\mathcal{X}$.

\begin{claim}
    Let $b \in \Z^m$ be any vector for which there exists $x \in \ZZ^n$ with $A x
    = b$. Then there also exists $y \in \ZZ^n$ such that $Ay = b$ and $\supp(y)
    \in \mathcal{X}$.
\end{claim}
\begin{claimproof}
    Let $z \in \ZZ^n$ be the lexicographically minimum vector such that $Az = b$.
    Let $\hat{z} \in \{0,1\}^n$ be such that $\hat{z}_i = 1$ iff $z_i \neq 0$
    and $\hat{z}_i=0$ otherwise. Let $\hat{b}$ be such that $A\hat{z} =
    \hat{b}$. Observe that $\norm{\hat{b}}_\infty < n\Delta$. 
    
    Hence it remains
    to show that $\hat{z}$ is the lexicographically minimal vector for which
    $A\hat{z} = \hat{b}$. Assume for contradiction that there
    exists $\hat{y} \in \ZZ^n$ such that $A \hat{y} = \hat{b}$ and $\hat{y}$ is lexicographically
    smaller than $\hat{z}$. Consider a vector $y = z - \hat{z} + \hat{y}$. Note,
    that $\supp(\hat{z}) \subseteq \supp(z)$ so $y \in \ZZ^n$. Clearly $Ay = Az
    = b$. Moreover, because
    $\hat{y}$ is lexicographically smaller than $\hat{z}$, it follows that $y$ is
    lexicographically smaller than $z$. This contradicts our assumption that $z$ is lexicographically minimal.
\end{claimproof}
Thus the set $\mathcal{X}$ satisfies the property stated in \Cref{lem:candidates}, concluding the proof.
\end{proof}

With~\Cref{lem:candidates} in hand, let us present our algorithm for $\DC$-Unbounded
Subset Sum.

\begin{theorem}[Near-\XP~algorithm for $\DC$-Unbounded Subset Sum]\label{thm:dc-uss-xp}
    $\DC$-Unbounded Subset Sum can be solved in time $n^{O_\DC(1)}$ if an ILP instance $\mathcal{I}$ on $v$ variables can be solved in time $2^{O(v)} \poly(|\mathcal{I}|)$.

    Using the current best algorithm~\cite{ILP-rothvoss}, $\DC$-Unbounded
    Subset Sum can be solved in $n^{O_\DC(1) \log\log\log(n)}$ time.
\end{theorem}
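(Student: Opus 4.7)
The plan is to reduce $\DC$-Unbounded Subset Sum to a polynomial number of low-dimensional unbounded ILPs via the constructive Freiman's Theorem, and then to invoke an ILP solver on each low-dimensional instance. First I would apply \Cref{thm:Freiman-constructive} to the input set $Z$ to obtain, in $\tilde{O}_\DC(n)$ time, a GAP $P = \{\langle y, v\rangle : v \in \prod_i [L_i]\}$ of dimension $d := d(\DC)$ that contains $Z$, where by \Cref{obs:bigL-bound} we may assume $L_i \le n^{2/d}$. For each $z_i \in Z$ I would then locate a coordinate vector $v(z_i) \in \prod_j [L_j]$ with $z_i = \langle y, v(z_i)\rangle$, by exhaustive search over $P$ in $O_\DC(n)$ time, and assemble these vectors as the columns of a matrix $A \in \Z^{d \times n}$ with $\Delta := \norm{A}_\infty \le n^{2/d}$. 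The key identity is that, for every $x \in \ZZ^n$, $\langle y, Ax\rangle = \sum_i x_i z_i$, so a witness for the Unbounded Subset Sum instance is precisely an $x \in \ZZ^n$ with $\langle y, Ax\rangle = t$.

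Next, I would feed $A$ into \Cref{lem:candidates} to obtain, in $(n\Delta)^{O(d)} = n^{O_\DC(1)}$ time, a family $\mathcal{X} \subseteq 2^{[n]}$ of at most $(2n\Delta+1)^d = n^{O_\DC(1)}$ candidate supports, each of cardinality at most $d\log_2(2n\Delta+1) = O_\DC(\log n)$. The correctness step is the following observation: if $x \in \ZZ^n$ solves the Unbounded Subset Sum instance and $b := Ax$, then \Cref{lem:candidates} guarantees a small-support $y^\ast \in \ZZ^n$ with $Ay^\ast = b$ and $\supp(y^\ast) \in \mathcal{X}$, and then $\sum_i y^\ast_i z_i = \langle y, Ay^\ast\rangle = \langle y, b\rangle = \langle y, Ax\rangle = t$, so $y^\ast$ itself is already a valid witness.

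It therefore suffices to iterate over all $S \in \mathcal{X}$ and, for each, decide whether non-negative integers $(x_i)_{i\in S}$ exist with $\sum_{i\in S} x_i z_i = t$. Each such decision is an unbounded ILP with a single constraint on $v = |S| = O_\DC(\log n)$ variables, whose input occupies $\poly(n)$ bits. Under the hypothesis that an ILP on $v$ variables admits a $2^{O(v)}\poly(|\mathcal{I}|)$ algorithm, each sub-instance costs $2^{O_\DC(\log n)}\poly(n) = n^{O_\DC(1)}$, and the total running time is $n^{O_\DC(1)} \cdot n^{O_\DC(1)} = n^{O_\DC(1)}$. Under the current record $(\log v)^{O(v)}\poly(|\mathcal{I}|)$ from \cite{ILP-rothvoss}, each sub-instance costs $(\log\log n)^{O_\DC(\log n)} = n^{O_\DC(\log\log\log n)}$, and the overall running time becomes $n^{O_\DC(1)\,\log\log\log n}$, as claimed.

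The main obstacle I expect is the bookkeeping that ties the three separate tools together --- the constructive Freiman step, \Cref{lem:candidates}, and the black-box ILP solver --- and in particular verifying that a candidate support $S \in \mathcal{X}$ produced for the \emph{vector} equation $Ax = b$ automatically suffices for the \emph{scalar} equation $\langle y, Ax\rangle = t$ via the step vector $y$. Once this correspondence is in place, the remainder is a routine substitution of the bounds $d = d(\DC)$, $\Delta \le n^{2/d}$, and $v = O_\DC(\log n)$ into the stated ILP runtimes.
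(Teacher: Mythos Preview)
Your proposal is correct and follows essentially the same route as the paper: constructive Freiman to get the matrix $A$ with $\Delta \le n^{O(1/d(\DC))}$, then \Cref{lem:candidates} to enumerate candidate supports, then a low-dimensional ILP solver on each support. Your correctness step---observing that if $x$ is a USS witness with $b := Ax$, then the small-support $y^\ast$ guaranteed by \Cref{lem:candidates} for the \emph{vector} equation $Ay^\ast = b$ is automatically a USS witness because $\langle y, Ay^\ast\rangle = \langle y, b\rangle = t$---is exactly the point the paper's proof leaves implicit, and it is good that you spelled it out.
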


\begin{proof}
    Following the steps of our reduction from $\DC$-Subset Sum to HBILP feasibility (\Cref{lem:SS-to-HBILP-reduction}), we can use the constructive Freiman's theorem\footnote{We remark that because the construction of the GAP guaranteed by Freiman's theorem is not the runtime bottleneck, a slower constructive algorithm might suffice for this step.} (\Cref{thm:freiman}) to encode the $\DC$-Unbounded Subset Sum instance as an \emph{Unbounded} Hyperplane-Constrained ILP Feasibility instance given by $A \in \ZZ^{d(\DC) \times n}$ with $\Delta = \norm{A}_\infty = n^{O(1 / d(\DC))}$, step vector $\ell$, and target $t$. 

    We then use~\Cref{lem:candidates} to construct a set $\mathcal{X}$ of candidate
    supports in $(n\Delta)^{O(m)} = n^{O_\DC(1)}$ time. For each support vector
    $x^* \in \mathcal{X}$, we reduce the ILP to variables in $x^*$. This gives us a program with $|x^*| = O(m
    \log_2(n\Delta)) = O_{\DC}(\log(n))$ variables. Now, we encode this problem as the ILP
    \begin{displaymath}
        \left\{x \in \ZZ^n \; \middle| \; \sum_{j=1}^d \ell_j \sum_{i \in x^*} a_{i,j} x_i = t \right\}.
    \end{displaymath}

    Observe that this is equivalent to an instance of Unbounded Subset Sum with
    $O_\DC(\log(n))$ items. Thus any algorithm for Unbounded Subset Sum (or, more generally, any algorithm for unbounded ILP) that runs in time $2^{O(v)}$ on instances with $v$ variables would automatically yield an $n^{O_\DC(1)}$ time
    algorithm for $\DC$-Unbounded Subset Sum. Using the best-known algorithm for unbounded ILPs, which runs in time
    $(\log{v})^{O(v)}$ \cite{ILP-rothvoss}, we get an $n^{O_\DC(1)
    \log\log\log(n)}$-time algorithm.
\end{proof}

\section{k-SUM with Constant Doubling}
\label{sec:ksum}

Our final contribution concerns the analogous problem of $k$-SUM with bounded doubling constant, which we refer to as $(\DC, k)$-SUM. We prove 
\Cref{thm:ksum}
and observe that the same approach gives an algorithm for $4$-SUM that is tight up to subpolynomial factors, assuming the $k$-SUM conjecture.

\begin{prob}[$(\DC, k)$-SUM]{prob:kSum}
\textbf{In:} An integer set $X = \{x_1, x_2, \dots, x_n\}$ such that $|X+X|
    \leq \DC|X|$ and an integer target $t$. \\
    \textbf{Out:} $S \subseteq X$ with $|S| = k$ such that $\Sigma(S) = t$, or `NO' if no solution exists.
\end{prob}


We note that \cite{abboud2022stronger} and \cite{jin2022removing} also present algorithms for $3$-SUM in cases where additive structure in the input is controlled by the doubling constant, and also make use of fast algorithms for sparse convolution. In both cases, these authors focus on the setting of tripartite $3$-SUM under the condition that at least one of the three input sets $A$, $B$, and $C$ is guaranteed to have a small doubling.

Before we continue, let us recall the standard "color-coding" technique that allows us to
ensure that each integer in the solution is taken at most once.

\begin{lemma}\label{lem:splitter}
    Let $A$ be a set of $n$ integers. There exists a set family
    \[
        \mathcal{P} \subseteq \{ (A_1,\ldots,A_k) \text{ such that } A_1 \uplus \ldots \uplus A_k \text{ is partition of } A\}
    \]
    with the following properties:
    \begin{enumerate}
        \item For any $S \subseteq A$ of cardinality $|S| = k$, there exists $(A_1,\ldots,A_k) \in \mathcal{P}$ with $|S \cap A_i| = 1$ for all $i \in [k]$.
        \item $|\mathcal{P}| = 2^{O(k)} \cdot \log(n)$.
    \end{enumerate}
     $\mathcal{P}$ can be constructed deterministically in $2^{O(k)} n \log(n)$ time.
\end{lemma}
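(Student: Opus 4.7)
\medskip

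\noindent\textbf{Proof proposal.} The plan is to reduce the lemma to the construction of an explicit $k$-perfect hash family and then convert such a family into the desired family $\mathcal{P}$ of ordered partitions. Recall that a family $\mathcal{H}$ of functions $h \colon [n] \to [k]$ is called \emph{$k$-perfect} if for every $k$-element subset $T \subseteq [n]$ there is some $h \in \mathcal{H}$ whose restriction to $T$ is a bijection onto $[k]$. The key observation is that if we identify $A$ with $[n]$ via an arbitrary ordering of its elements and, for each $h \in \mathcal{H}$, set $A_i^{(h)} \coloneqq h^{-1}(i)$, then $(A_1^{(h)}, \ldots, A_k^{(h)})$ is an ordered partition of $A$, and property (1) of the lemma is immediate: given any $S \subseteq A$ with $|S| = k$, the splitting $h \in \mathcal{H}$ that is bijective on $S$ yields a partition in which each $A_i^{(h)}$ contains exactly one element of $S$. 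So the task reduces to exhibiting a $k$-perfect hash family of size $2^{O(k)} \log n$ constructible in $2^{O(k)} n \log n$ time.

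For this I would invoke the classical deterministic construction of Naor, Schulman, and Srinivasan (the ``splitter'' construction used to derandomize Alon--Yuster--Zwick color coding), which produces a $k$-perfect hash family from $[n]$ to $[k]$ of size $2^{O(k)} \log n$ in time $2^{O(k)} n \log n$. Concatenating this with the partition construction above yields $\mathcal{P}$ satisfying both size bound (2) and the property (1), and the time bound follows directly because building the partition associated to each $h \in \mathcal{H}$ takes $O(n)$ time and we build $|\mathcal{H}| = 2^{O(k)} \log n$ of them.

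The only step with any subtlety is invoking the right black box with matching parameters; everything else is bookkeeping. I do not see any genuine obstacle beyond citing the splitter construction. An alternative, more self-contained plan would be to use the simpler two-level scheme: first hash $[n]$ into $[k^2]$ using a small universal hash family (of size $O(\log n)$ and evaluation time $O(1)$) to obtain, with certainty for at least one hash, a set on which collisions among any fixed $k$-element $S$ are absent, then compose with an explicit $k$-perfect family from $[k^2]$ to $[k]$ of size $2^{O(k)}$. The composed family has size $2^{O(k)} \log n$ and is constructible in the claimed time, giving the same conclusion; I would prefer whichever version makes the constants in $2^{O(k)}$ cleanest for the downstream application in Section~\ref{sec:ksum}.
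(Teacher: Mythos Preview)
Your proposal is correct and follows essentially the same approach as the paper: both reduce the lemma to the Naor--Schulman--Srinivasan construction of an $(n,k)$-perfect hash family of size $2^{O(k)}\log n$ and then turn each hash function into the ordered partition of $A$ given by its preimages. The paper's proof is exactly this bookkeeping, so there is nothing to add.
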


The proof of~\Cref{lem:splitter} is a reformulation of a standard construction of an $(n,k)$-perfect hash family (see \cite{alon1995color}, Section 4). For completeness, we include a standalone proof in~\Cref{apx:splitter}. We now commence with the proof of~\Cref{thm:ksum}.

\begin{theorem} \label{thm:ksum}
    Given an integer set $X$ such that
    $|X+X| \le \DC \cdot |X|$ and an integer $t$, we can decide if there exists a set $\{x_1,\ldots,x_k\} \subseteq X$ such that
    $x_1 + \ldots + x_k = t$ in deterministic
    time $\tilde{O}(\DC^{\lceil k/2 \rceil} \cdot 2^{O(k)}  \cdot n)$.
\end{theorem}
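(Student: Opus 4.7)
The plan is to combine the classical meet-in-the-middle strategy for $k$-SUM with the color-coding family from \Cref{lem:splitter} and the deterministic sparse nonnegative convolution algorithm of \cite{bringmann2022deterministic}, using the Pl\"unnecke--Ruzsa inequality $|jX|\le \DC^j|X|$ to control the sizes of the intermediate sumsets.

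First I would apply \Cref{lem:splitter} to obtain a family $\mathcal{P}$ of $2^{O(k)}\log n$ partitions $(A_1,\ldots,A_k)$ of $X$ with the property that every $k$-subset of $X$ is ``rainbow'' for at least one member of $\mathcal{P}$, i.e.\ its elements fall into pairwise distinct parts. For each such partition, it suffices to decide whether there exist $a_i\in A_i$ with $a_1+\cdots+a_k=t$; because the $A_i$'s are disjoint subsets of $X$, any such tuple automatically consists of $k$ distinct elements of $X$, which is exactly the semantics required by $(\DC,k)$-SUM.

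For a fixed partition I would split the parts into a left block of size $\lceil k/2 \rceil$ and a right block of size $\lfloor k/2\rfloor$ and compute the two sumsets
\[
    L \;=\; A_1 + \cdots + A_{\lceil k/2\rceil}, \qquad R \;=\; A_{\lceil k/2\rceil+1} + \cdots + A_k
\]
incrementally, one summand at a time. Since each $A_i \subseteq X$, every prefix sumset is contained in $jX$ and hence has size at most $\DC^j n$ by Pl\"unnecke--Ruzsa. After a translation that makes all entries nonnegative (which only shifts $t$ by a known amount and does not inflate any sumset), the algorithm of \cite{bringmann2022deterministic} computes each convolution in time near-linear in its output size, so the geometric sum over the $\lceil k/2\rceil - 1$ intermediate steps is dominated by the final one, giving $\tilde{O}(\DC^{\lceil k/2\rceil} n)$ per partition to construct $L$ and $R$. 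A final pass through $L$ that looks up $t-\ell$ in a hash table of $R$ for each $\ell \in L$ decides the partition within the same bound, and multiplying by $|\mathcal{P}|=2^{O(k)}\log n$ yields the claimed runtime $\tilde{O}(\DC^{\lceil k/2\rceil}\cdot 2^{O(k)}\cdot n)$.

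The main obstacle I anticipate is book-keeping around the sparse convolution primitive: although subsets of $X$ need not inherit a small doubling constant themselves, one only needs the absolute size bound $|A_1+\cdots+A_j|\le |jX|\le \DC^j n$ to invoke sparse convolution with an a priori output-size guarantee, so the apparent loss of structure in passing to the $A_i$'s is harmless. Correctness follows from the splitter property (any witnessing $k$-subset is rainbow for some partition in $\mathcal{P}$), and the factor $\DC^{\lceil k/2\rceil}$ is exactly the Pl\"unnecke--Ruzsa bound on the half-iterated sumset, so everything else is routine.
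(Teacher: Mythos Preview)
Your proposal is correct and follows essentially the same approach as the paper: color-coding via \Cref{lem:splitter}, meet-in-the-middle after building the two half-sumsets by iterated sparse convolution, and Pl\"unnecke--Ruzsa to bound $|jX|\le \DC^j n$. The only cosmetic differences are that the paper uses sorting plus binary search instead of a hash table (which is the safer choice given the theorem asserts a \emph{deterministic} bound) and does not spell out the nonnegativity shift you mention, though that shift is indeed needed to invoke the sparse-convolution primitive as stated.
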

\begin{proof}
    
Let $X$ be an integer set of size $n$, and let $\{x_1,\ldots,x_k\} \subseteq X$ denote a set of $k$ integers that sum to $t$. We commence by constructing the
family $\mathcal{P}$ from~\Cref{lem:splitter} and guessing a partition
$(X_1,\ldots,X_k) \in \mathcal{P}$ of $X$ such that $x_i \in X_i$ for all $i \in
[k]$. Observe that by~\Cref{lem:splitter} this incurs only an additional $2^{O(k)}
\log(n)$ factor in the running time.

Now, we use the sparse convolution algorithm of Bringmann et al.~\cite{bringmann2022deterministic}.

\begin{lemma}[Theorem 1 in~\cite{bringmann2022deterministic}]\label{lem:sparse-conv}
    Given two sets $A,B \subseteq [\Delta]$, the set $A+B \coloneqq
    \{a+b \mid a \in A, b \in B\}$ can be constructed deterministically in
    $\tilde{O}(|A+B| \cdot \polylog(\Delta))$ time.
\end{lemma}

We use~\Cref{lem:sparse-conv} to enumerate two sets:
\begin{align*}
    \mathcal{L} \coloneqq X_1 + \ldots + X_{\floor{k/2}} &&\text{ and } &&
    \mathcal{R} \coloneqq X_{\floor{k/2}+1} + \ldots + X_k.
\end{align*}


Both $\mathcal{L}$ and $\mathcal{R}$ can be computed
deterministically in
$\tilde{O}(k \cdot (|\mathcal{L}| + |\mathcal{R}|))$ time by
repeatedly applying~\Cref{lem:sparse-conv}.
Next, with both $\mathcal{L}$ and $\mathcal{R}$ in hand, we apply the
meet-in-the-middle approach to recover a solution if one exists. This can be
implemented in $\tilde{O}(|\mathcal{L}| + |\mathcal{R}|)$ time by first sorting
$\mathcal{L}$ and $\mathcal{R}$, and then for every element $a \in \mathcal{L}$ using binary search to decide if $t-a \in \mathcal{R}$. Finally, if for at least
one $a \in \mathcal{L}$ we find an accompanying element in $\mathcal{R}$, we know that the instance has a solution.

As stated, the algorithm decides $k$-SUM without recovering a solution. However, given that a solution exists we can recover a solution via binary search at the cost of an additional $O_k(\log(n))$ factor. This concludes the description of the algorithm.

Correctness of the algorithm follows from the fact that~\Cref{lem:splitter} returns a valid partition, and from the definition of the sets $\mathcal{L}$ and $\mathcal{R}$. It remains to bound the runtime. Since the other steps of the algorithm take time $\tilde{O}_k(n)$, the bottleneck occurs in the meet-in-the-middle step, which takes time $\tilde{O}_k(|\mathcal{L}| +
|\mathcal{R}|)$. Therefore it remains to bound
the sizes of $\mathcal{L}$ and $\mathcal{R}$. Without loss of generality, consider
$|\mathcal{R}|$, and observe
\begin{displaymath}
    |\mathcal{R}| = |X_1 + \ldots + X_{\ceil{k/2}}| \leq |\ceil{k/2}X| = \DC^{\ceil{k/2}} |X|,
\end{displaymath}
where the final step applies Pl{\"u}nnecke-Ruzsa (\Cref{lem:plunnecke}). This concludes the proof of~\Cref{thm:ksum}.
\end{proof}

In the specific case of $k=4$, using the doubling constant directly gives a
slightly better bound. The resulting algorithm for $(\DC, 4)$-SUM is
optimal up to subpolynomial factors under the $4$-SUM conjecture.

\begin{corollary}
    \label{prop:4sum}
    $(\DC, 4)$-SUM can be solved in expected time $\tilde{O}(\DC n)$.
    Moreover, for any constant $\epsilon > 0$, $(\DC, 4)$-SUM cannot be solved
    in $O( \DC^{1-\epsilon} n)$ time unless $4$-SUM can be solved in time $O(n^{2 - \epsilon})$ for $\epsilon > 0$.
\end{corollary}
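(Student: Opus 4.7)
\emph{Upper bound.} The plan is to mirror the proof of \Cref{thm:ksum} for $k=4$, but to sharpen the analysis of the sumset sizes. First, use color coding (either randomized $4$-colorings, which succeed with probability $4!/4^4$ per trial, or the deterministic splitter of \Cref{lem:splitter}) to partition $X = X_1 \uplus X_2 \uplus X_3 \uplus X_4$ so that any fixed $4$-term solution has exactly one element in each part. Then, for each such partition, compute
\[
    \mathcal{L} \coloneqq X_1 + X_2 \quad \text{and} \quad \mathcal{R} \coloneqq X_3 + X_4
\]
using the sparse convolution algorithm of \Cref{lem:sparse-conv}, sort both sets, and for each $a \in \mathcal{L}$ test whether $t-a \in \mathcal{R}$ by binary search.

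The key observation driving the improvement is that for $k=4$ we have the direct containment $X_i + X_j \subseteq X + X$, so $|\mathcal{L}|, |\mathcal{R}| \leq |X+X| \leq \DC \cdot n$ \emph{without} any appeal to Pl\"unnecke-Ruzsa. This saves the extra factor of $\DC$ present in \Cref{thm:ksum}, where the generic bound treats $\mathcal{L}$ as a subset of the iterated sumset $2X$ and consequently loses a factor of $\DC^{\lceil k/2 \rceil}$. Invoking \Cref{lem:sparse-conv} twice, then sorting and searching, takes time $\tilde{O}(|\mathcal{L}| + |\mathcal{R}|) = \tilde{O}(\DC n)$ per partition; since only $O(1)$ expected partitions are needed (or $O(\log n)$ deterministically, which is absorbed by the $\tilde{O}$), the total expected runtime is $\tilde{O}(\DC n)$.

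\emph{Lower bound.} Any $4$-SUM instance on an $n$-element integer set $X$ trivially satisfies $|X+X| \leq \binom{n+1}{2}$, so $\DC \leq (n+1)/2 = O(n)$. If there were an algorithm for $(\DC,4)$-SUM running in time $O(\DC^{1-\epsilon} n)$, then feeding it an arbitrary $4$-SUM instance would solve $4$-SUM in time $O(n^{1-\epsilon} \cdot n) = O(n^{2-\epsilon})$, contradicting the $4$-SUM conjecture. The only mild subtlety is whether the algorithm must be told $\DC$ in advance: this is not an issue, since one may either compute $\DC$ from $X$ directly or simply invoke the algorithm with the trivial upper bound $\DC = (n+1)/2$, as its runtime is monotone in the parameter.

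I expect no fundamental obstacle in either direction; the only piece requiring care is the observation that for $k=4$, a single application of the doubling hypothesis (rather than Pl\"unnecke-Ruzsa's multiplicative chaining) already suffices to bound $|\mathcal{L}|$ and $|\mathcal{R}|$, which is precisely what pushes the exponent of $\DC$ down from $\lceil k/2 \rceil = 2$ to $1$ and matches the conditional lower bound up to subpolynomial factors.
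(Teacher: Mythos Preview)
Your proposal is correct and matches the paper's proof essentially line for line: the upper bound comes from rerunning the \Cref{thm:ksum} argument and observing that for $k=4$ the bottleneck $|\mathcal{R}|\le|X+X|\le\DC\,|X|$ holds directly from the doubling hypothesis (no Pl\"unnecke--Ruzsa needed), and the lower bound comes from the trivial bound $\DC\le|X|$ on any $4$-SUM instance. Your added remarks on randomized versus deterministic color coding and on whether $\DC$ must be known in advance are fine but not needed for the argument.
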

\begin{proof}
    The upper bound follows by analysis of the proof of~\Cref{thm:ksum}. Recall
    that the bottleneck is $|\mathcal{R}|$, which in the case when $k=4$ is
    $|X+X| \le \DC \cdot |X|$.

    For the lower bound, observe that $|X+X| \le |X|^2$ and therefore $\DC \le
    |X|$. Thus, any algorithm for $(\DC, 4)$-SUM with runtime $O( \DC^{1-\epsilon} n)$ would yield an algorithm for $4$-SUM that runs in $O(n^{2 -
    \epsilon})$ time.
\end{proof}

\begin{remark}
    Applying the same lower bound
    argument to the more general case of $k$-SUM gives a lower bound of
    $\Omega(\DC^{\lceil k/2 \rceil - 1} n)$ for $(\DC, k)$-SUM under the $k$-SUM
    conjecture, leaving an $O(\DC)$-factor gap.
    
    In \cite{petridis2011upper}, Petridis gives the slightly improved bound
    \[
        |hA| = O(\DC^{h-1}|A|^{2 - \frac{1}{h}}),
    \]
    for finite sets in commutative groups, improving on Pl\"unnecke-Ruzsa for our purposes. This
    narrows the gap between our upper and lower bounds slightly, although the result
    is still not tight for $k > 4$. Further improvements, perhaps by non-trivially
    leveraging the small doubling constant of the input set to achieve a better
    algorithmic result for large $k$, would be both interesting and surprising.
\end{remark}

\subsection*{Acknowledgements}
    We thank Lars Rohwedder for insightful discussions that helped to clarify the connections between $\DC$-Subset Sum and Hyperplane-Constrained BILP. We also thank several anonymous reviewers for constructive comments that improved the presentation.

\bibliographystyle{alpha}
\bibliography{main}

\appendix

\section{
    Proof of \texorpdfstring{\Cref{thm:Freiman-constructive}}{}
}
\label{apx:freiman-constructive}

Making Freiman's Theorem constructive is not difficult from a strictly algorithmic perspective. However, verifying the result requires close attention to the structure of the original proof and requires concepts from additive combinatorics, group theory and the geometry of numbers along the way. For this reason, we closely follow Zhao's recent exposition of a proof due to Ruzsa~\cite{ruzsa1994generalized}, making modifications where necessary. We wish to emphasize that \emph{neither the existential results nor the overall proof structure below are novel}. Our contribution is the introduction of algorithmic techniques required to make the proof constructive in near-linear time. 

Given a set $X$ of cardinality $n$, our proof constructs a GAP $P$ of dimension
$d(\DC) = 2^{\DC^{O(1)}}$ and volume $v(\DC) = 2^{2^{\DC^{O(1)}}} n$, as in the
original statement of Freiman's Theorem. We do not attempt to optimize these
functions, but we suspect that techniques used to optimize $d$ and $v$ in
subsequent proofs of Freiman's Theorem (e.g.,
\cite{chang2002polynomial,schoen2011near,sanders2012bogolyubov,sanders2013structure})
could be used to improve the dependence on $\DC$ in our results.

At several points, we make use of the Pl\"{u}nnecke-Ruzsa Inequality, a useful bound on the additive ``growth rate'' of integer sets with a small doubling constant:

\begin{lemma}[Pl\"{u}nnecke-Ruzsa Inequality]
    \label{lem:plunnecke}
    If $X$ is a finite subset of an abelian group and $|X + X| \leq \DC|X|$ for
    a constant $\DC$, then for all nonnegative integers $s$ and $t$,
    \[
        |sX - tX| \leq \DC^{s+t}|X|.
    \]
\end{lemma}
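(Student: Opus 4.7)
The plan is to combine Petridis's modern proof of the sum-only Pl\"unnecke inequality with Ruzsa's triangle inequality. The argument breaks into three steps, and I will use $\DC$ in place of the customary $K$ for the doubling constant throughout.

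First, I will select a nonempty subset $Y \subseteq X$ that minimizes the ratio $K' \coloneqq |Y+X|/|Y|$; taking $Y = X$ shows $K' \leq \DC$. The central claim (due to Petridis) is that for every finite set $Z$,
\[
    |X + Y + Z| \leq K' \cdot |Y + Z|,
\]
which I will prove by induction on $|Z|$. The base case $|Z| = 1$ is immediate from the definition of $K'$. For the inductive step, I will write $Z' = Z \cup \{z\}$ with $z \notin Z$ and define
\[
    \tilde{Y} \coloneqq \{ y \in Y \; : \; y + z \in Y + Z \}.
\]
A direct check will show $X + \tilde{Y} + z \subseteq X + Y + Z$, which implies $|(X+Y+Z') \setminus (X+Y+Z)| \leq |X+Y| - |X+\tilde{Y}|$. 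Combining $|X+Y| = K'|Y|$ with the minimality bound $|X+\tilde{Y}| \geq K'|\tilde{Y}|$ (vacuous when $\tilde{Y} = \emptyset$), I obtain
\[
    |X+Y+Z'| \leq |X+Y+Z| + K'(|Y| - |\tilde{Y}|) \leq K'|Y+Z| + K'(|Y| - |\tilde{Y}|) = K'|Y+Z'|,
\]
where the final equality uses the identity $|Y+Z'| = |Y+Z| + |Y| - |\tilde{Y}|$, closing the induction.

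Second, iterating the claim with $Z = (s-1)X$, then $(s-2)X$, and so on yields $|Y + sX| \leq K'^{s} |Y|$ for every $s \geq 0$. Third, I will invoke Ruzsa's triangle inequality for sums and differences,
\[
    |B| \cdot |A - C| \leq |A + B| \cdot |B + C|,
\]
which follows from the injection $(x, b) \mapsto (a(x)+b,\, b+c(x))$, where $x = a(x) - c(x)$ is any fixed representation of $x \in A - C$. Setting $A = sX$, $B = Y$, $C = tX$ gives
\[
    |sX - tX| \leq \frac{|Y + sX| \cdot |Y + tX|}{|Y|} \leq K'^{s+t} |Y| \leq \DC^{s+t} |X|,
\]
as required. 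The main obstacle will be the bookkeeping in the inductive step: correctly identifying $\tilde{Y}$, verifying the containment $X + \tilde{Y} + z \subseteq X + Y + Z$, and ensuring that the minimality estimate is only invoked where it is valid. Once the induction succeeds, the remainder of the proof is a routine assembly.
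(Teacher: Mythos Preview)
Your proposal is correct: it is precisely Petridis's streamlined proof of the Pl\"unnecke--Ruzsa inequality, and the bookkeeping in the inductive step (the definition of $\tilde{Y}$, the containment $X+\tilde{Y}+z \subseteq X+Y+Z$, and the identity $|Y+Z'| = |Y+Z| + |Y| - |\tilde{Y}|$) is handled accurately. Note, however, that the paper does not supply its own proof of this lemma; it is quoted as a standard tool from additive combinatorics, so there is no paper-side argument to compare against.
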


\subsection{Ruzsa's Modeling Lemma}
\label{subsec:ruzsa-modeling}

    A core ingredient in Freiman's theorem is Ruzsa's Modeling Lemma. This allows us to take an integer set $A$ and map a large piece of it to small, finite group (specifically, the prime cyclic group $\Z_q$) in such a way that additive structure is ``preserved'': the image in $\Z_q$ behaves isomorphically to the preimage in $\Z$ under addition, up to a certain fixed number $s$ of additions. The size $q$ of the prime cyclic group is controlled by the size of $|sA - sA|$, which is related to the doubling constant by the Pl\"unnecke-Ruzsa Inequality.

    A map that preserves additive structure in this way is called a \emph{Freiman $s$-isomorphism}:

    \begin{definition}[Freiman Homomorphism and Isomorphism]
        Given subsets $A$ and $B$ of two (possibly different) abelian groups and a positive integer $s \geq 2$, $\phi: A \rightarrow B$ is a \emph{Freiman $s$-homomorphism} if
        \[
                \phi(a_1) + \dots + \phi(a_s) = \phi(a_1') + \dots + \phi(a_s')
        \]
        for all pairs of $s$-tuples in $A$ satisfying $a_1 + \dots + a_s = a_1' + \dots + a_s'$. $\phi$ is a \emph{Freiman $s$-isomorphism} if $\phi$ is a bijection and both $\phi$ and $\phi^{-1}$ are Freiman $s$-homomorphisms.
    \end{definition}

    \begin{lemma}[Constructive Ruzsa's Modeling Lemma, c.f. \cite{zhao2022graph} Theorem 7.7.3]
        \label{lem:RML-constructive}
        Let $A$ be a set of $n$ integers with $|A+A| \leq \DC |A|$, set $\Delta = \max_{a \in A} |a|$, let $s \geq 2$ be a fixed
        constant, and set $m = 4 \DC^{2s} n$. There exists an $O(n + \polylog(\Delta))$-time algorithm that:
        \begin{enumerate}
            \item with probability at least $1/2$, returns a mapping $\psi: \Z \rightarrow \Z_m$ and a set $A' \subset A$ with $|A'| \geq |A|/s$ such that $\psi$ is a $s$-Freiman isomorphism from $A'$ to $\psi(A')$, and
            \item with probability at most $1/2$, returns `failure'.
        \end{enumerate}
    \end{lemma}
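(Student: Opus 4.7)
The plan is to make the classical Ruzsa construction fully algorithmic. Recall the standard existential proof: pick a prime $N$ so large that the reduction $\Z \to \Z_N$ is itself a Freiman $s$-isomorphism on $A$; multiply by a random $\lambda \in \Z_N^*$; lift each $\lambda a \bmod N$ to its representative in $[0, N)$; and reduce the result modulo $m$. On any subset whose lifts lie in a short sub-interval of $[0, N)$, the $s$-fold sums do not wrap around modulo $N$, so the map is automatically a Freiman $s$-homomorphism; the inverse direction is secured by showing that a random $\lambda$ avoids the (few) residues that would cause a modular collision. The constructive rendering performs each of these steps explicitly and bookkeeps the time.

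Concretely, I would first find a prime $N$ of size $\Theta(s\Delta + \DC^{2s}n)$ in $\polylog(\Delta)$ time using Bertrand's postulate plus a fast primality test. The lower bound $N > 2s\Delta$ guarantees that the natural embedding $A \hookrightarrow \Z_N$ is a Freiman $s$-isomorphism, and taking $N$ somewhat larger than $\DC^{2s}n$ will also keep the error probability below $1/2$. Next, sample $\lambda$ uniformly from $\{1, \ldots, N-1\}$ and, in a single $O(n)$ sweep, compute $T(a) = \lambda a \bmod N$ for every $a \in A$ while bucketing the elements according to which of the intervals $[iN/s, (i+1)N/s)$ for $i \in \{0, 1, \ldots, s-1\}$ contains $T(a)$. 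By pigeonhole some bucket $A'$ has size at least $|A|/s$; after locating it, subtract the corresponding offset $c_{i^*}$ and reduce modulo $m$ to define $\psi(a) = (T(a) - c_{i^*}) \bmod m$. Translation by a constant preserves the Freiman-isomorphism property (the $sc_{i^*}$ terms cancel on both sides of the defining identity), and for $a \in A'$ the lift lies in an interval of width at most $\lceil N/s \rceil$, so every $s$-fold sum of such lifts lies in $[0, N)$ and the mod-$N$ reduction is invisible.

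The one genuinely technical step is the probability analysis, and I expect this to be the main obstacle in making the constants line up. A mod-$m$ collision that breaks the $s$-isomorphism property forces some nonzero $\sigma \in sA - sA$ and nonzero integer $k$ with $|k| < N/m + O(1)$ to satisfy $\lambda \sigma \equiv km \pmod{N}$; since $N$ is prime and $|\sigma| < N$, $\sigma$ is invertible modulo $N$, so each pair $(\sigma, k)$ rules out exactly one value of $\lambda$. By Pl\"unnecke-Ruzsa (\Cref{lem:plunnecke}) we have $|sA - sA| \le \DC^{2s}|A|$, so with $m = 4\DC^{2s}n$ the total number of bad residues is at most $|sA - sA| \cdot 2(N/m + 1) \le N/2 + O(\DC^{2s}n)$, which is at most $(N-1)/2$ once $N$ is taken sufficiently large relative to $\DC^{2s}n$. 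This union bound is essentially Ruzsa's original calculation and requires no new ideas, only careful accounting.

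The remainder is a runtime check. Because $N$ is $\mathrm{poly}(\Delta, \DC^{2s}, n)$ it fits in $O(1)$ machine words, so every modular multiplication is $O(1)$ in the word-RAM model; the sweep over $A$ therefore costs $O(n)$, the bucketing costs $O(n)$ since $s$ is constant, and the one-time prime search costs $\polylog(\Delta)$. Thus the overall running time is $O(n + \polylog(\Delta))$, and by the union bound above the output $(\psi, A')$ is valid with probability at least $1/2$, matching the claim. No explicit certification of the Freiman property is needed at runtime; standard repetition amplifies the success probability if desired, though it is unnecessary for the lemma as stated.
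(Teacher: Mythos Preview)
Your proposal is correct and follows essentially the same route as the paper: pick a prime larger than $\max|sA-sA|$, multiply by a uniformly random unit $\lambda$, pigeonhole the lifts into $s$ intervals to obtain $A'$, and reduce modulo $m$, with the success probability controlled by a union bound over $sA-sA$ using Pl\"unnecke--Ruzsa. The only cosmetic differences are that you translate by the bucket endpoint before reducing (harmless, as you note) and phrase the bad-event count via pairs $(\sigma,k)$ rather than the paper's simpler ``$\Pr[m\mid\phi_\lambda(c)]<2/m$'' per element; the latter formulation avoids your need to take $N$ large relative to $\DC^{2s}n$, but the two computations are equivalent once the constants are tracked.
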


    
    \begin{proof}
        Fix any prime $q > \max(sA - sA)$. As $|\max(sA - sA)| \leq s\Delta$, we can
        find a prime of this size with high probability in time $O_s(\polylog(\Delta))$ by repeatedly guessing and testing primality \cite{agrawal2004primes}.
        
         For each value $\lambda \in [q]$, let $\phi_\lambda$ denote the map $\phi_\lambda: \Z \rightarrow \Z_q \rightarrow \Z_q \rightarrow \Z$ that 
        \begin{enumerate}
            \item first maps $a \in \Z$ to $a' = a \pmod{q} \in \Z_q$,
            \item computes $a'' = \lambda a'$ in $\Z_q$,
            \item then maps $a''$ back to $[0:q-1] \subset \Z$ via the identity map.
        \end{enumerate}

        Choose $\boldsymbol{\lambda} \in [q - 1]$ uniformly at random. Since $q$
        is prime, any element $r \in [q-1]$ is a generator for $\Z_q$, and
        thus for any $r \in [q-1]$, $\phi_{\boldsymbol{\lambda}}(r)$ is
        uniformly distributed over $[q-1]$. Since $q > \max(sA - sA)$, for any nonzero integer $c \in sA - sA$, $c \in [q-1]$ and thus $\phi_{\boldsymbol{\lambda}}(c)$ is uniformly random over $[q-1]$. 

        Thus for any nonzero $c \in sA - sA$ the probability that $\phi_{\boldsymbol{\lambda}}(c)$ is divisible by $m$ is less than $2/m$. As $m = 4\DC^{2s} n \geq 4|sA - sA|$ by \Cref{lem:plunnecke}, $|sA - sA| \leq m/4$ and the probability that $m$ evenly divides \emph{any} nonzero element in $sA - sA$ is less than $1/2$ by a union bound. Compute $\phi_{\boldsymbol{\lambda}}(A)$ in time $O(n)$ and output ``failure'' if $m$ divides any element in this set. Otherwise, continue.

        Let $A'$ be a subset of $A$ such that $|A'| \geq n / s$ and
        $\mathrm{diam}(\phi_{\boldsymbol{\lambda}}(A')) \leq q / s$. Note that the existence of $A'$ is guaranteed by the pigeonhole principle. We can compute $A'$ in time $O(n)$ by partitioning $[q]$ into evenly-sized intervals and computing $\phi_{\boldsymbol{\lambda}}(A)$. 
        
        Finally, we define $\psi_{\boldsymbol{\lambda}}: \Z \rightarrow \Z_m$ by $\psi_{\boldsymbol{\lambda}}(x) = \phi_{\boldsymbol{\lambda}}(x) \pmod{m}$ and observe that $\psi_{\boldsymbol{\lambda}}$ is a $s$-isomorphism from $A'$ to $\psi_{\boldsymbol{\lambda}}(A')$ as $m$ does not divide any nonzero element in $sA - sA$. This follows from the final two paragraphs of the proof of Theorem 7.7.3 in \cite{zhao2022graph}, with the argument unchanged.
    \end{proof}

\subsection{Bogolyubov's Lemma in \texorpdfstring{$\Z_m$}{} }
    Given a relatively large set $B \in \Z_m$, Bogolyubov's Lemma states that $2B - 2B$ contains a set of points that behaves ``like a subspace'' in the sense that each point is ``close to orthogonal'' to a certain set $R \in \Z_m$. Specifically, we employ the concept of a \emph{Bohr set}, defined as
    \[
            \text{Bohr}_m(R, \epsilon) \coloneqq \{x \in \Z_m:
            \norm{rx / m}_{\mathbb{R}/\Z} \leq \epsilon, \text{ for all } r \in R\}.
    \]
    (Recall that the norm $\norm{\cdot}_{\mathbb{R}/\Z}$ denotes distance from the nearest integer.) We refer to $|R|$ as the \emph{dimension} and $\epsilon$ as the \emph{width} of the Bohr set. 

    A Bohr set is analogous to a subspace of codimension $|R|$, in the sense that if we add together several elements of a Bohr set, their sum is still close to a multiple of $m$ when scaled by any $r \in R$. In this sense, we can view Bogolyubov's lemma as a statement that sets of the form $2B - 2B \in \Z_m$ contain subsets with group-like structure.

    \begin{lemma}[Constructive Bogolyubov's lemma for $\Z_m$, c.f. \cite{zhao2022graph} Theorem 7.8.5]
        \label{lem:BL-constructive}
        Given $B \subseteq \Z_m$ with $|B| = \alpha m$, we can
        compute $R \subseteq \Z_m$ of dimension $|R| <
        1/\alpha^2$ such that $\text{Bohr}(R, 1/4) \subseteq 2B - 2B$ in time $\tilde{O}(m)$.
    \end{lemma}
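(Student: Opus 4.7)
The plan is to apply discrete Fourier analysis on $\Z_m$ and take $R$ to be the set of ``large'' Fourier coefficients of the indicator $1_B$. Using the normalization $\hat{f}(r) = \frac{1}{m}\sum_{x \in \Z_m} f(x) e^{-2\pi i rx/m}$, we have $\hat{1_B}(0) = \alpha$ and Parseval gives $\sum_{r} |\hat{1_B}(r)|^2 = \alpha$. Define
\[
    R \coloneqq \bigl\{ r \in \Z_m \setminus \{0\} \; : \; |\hat{1_B}(r)|^2 > \alpha^3 \bigr\}.
\]
From $|R| \cdot \alpha^3 < \sum_{r \in R} |\hat{1_B}(r)|^2 \leq \alpha$ the cardinality bound $|R| < 1/\alpha^2$ is immediate.

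To establish $\text{Bohr}(R, 1/4) \subseteq 2B - 2B$, I plan to show that the four-fold convolution $N \coloneqq 1_B \ast 1_B \ast 1_{-B} \ast 1_{-B}$ is strictly positive on the Bohr set, since $N(x) > 0$ implies at least one representation $x = b_1 + b_2 - b_3 - b_4$ with $b_i \in B$. Using $\widehat{1_{-B}}(r) = \overline{\hat{1_B}(r)}$ and Fourier inversion,
\[
    N(x) = \sum_{r \in \Z_m} |\hat{1_B}(r)|^4 \, e^{2\pi i rx/m}.
\]
Split the sum into $r = 0$, $r \in R$, and $r \notin R \cup \{0\}$. The $r=0$ term contributes $\alpha^4$. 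For $x \in \text{Bohr}(R, 1/4)$, each $r \in R$ satisfies $\| rx/m \|_{\mathbb{R}/\Z} \leq 1/4$, so $\mathrm{Re}\bigl(e^{2\pi i rx/m}\bigr) \geq \cos(\pi/2) = 0$ and the $R$-block contributes nonnegatively to the real part of $N(x)$. The tail is controlled by the threshold:
\[
    \Bigl| \sum_{r \notin R \cup \{0\}} |\hat{1_B}(r)|^4 e^{2\pi i rx/m} \Bigr| \leq \alpha^3 \sum_{r \neq 0} |\hat{1_B}(r)|^2 = \alpha^3(\alpha - \alpha^2) < \alpha^4.
\]
Combining these, $\mathrm{Re}\, N(x) \geq \alpha^4 - (\alpha^4 - \alpha^5) = \alpha^5 > 0$, which proves the desired containment.

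Algorithmically, the entire construction reduces to one length-$m$ FFT to compute $\hat{1_B}$ in $O(m \log m)$ operations, followed by a linear scan selecting the $r$ whose squared magnitude exceeds $\alpha^3$. This yields the $\tilde{O}(m)$ runtime, deterministically. The main subtlety I anticipate is bookkeeping: one must use a strict threshold so that both the cardinality bound and the strict positivity of $N(x)$ come out as strict inequalities, and one must either carry out the FFT in a finite-precision model sufficient to correctly compare $|\hat{1_B}(r)|^2$ against the rational quantity $\alpha^3$, or (equivalently) work with the integer-valued correlation $\sum_x 1_B(x)\overline{\omega^{rx}}$ directly. Neither issue changes the asymptotic runtime.
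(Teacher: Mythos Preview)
Your proposal is correct and takes essentially the same approach as the paper: define $R$ as the set of nonzero frequencies where $|\hat{1_B}(r)| > \alpha^{3/2}$ (equivalently $|\hat{1_B}(r)|^2 > \alpha^3$), and compute it via a single length-$m$ FFT in $O(m\log m)$ time. The paper simply cites Zhao for the existential containment $\text{Bohr}(R,1/4)\subseteq 2B-2B$ and the bound $|R|<1/\alpha^2$, whereas you have reproduced that standard Fourier-analytic argument in full; the constructive content is identical.
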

    \begin{proof}
    To make Bogolyubov's lemma in $\Z_m$ \cite[Theorem 7.8.5]{zhao2022graph} constructive, it suffices to observe that $R$ is defined explicitly as 
        \[
            R = \{ r \in \Z_m \setminus \{0\} \; : \;
            |\hat{\mathbbm{1}_B}(r)| > \alpha^{3/2}\}.
        \]
        Here $\hat{\mathbbm{1}_B}$ is the finite group Fourier transform of
        $\mathbbm{1}_B$, the membership function of $B$: 
        \[
            \hat{\mathbbm{1}_B}(r) = \frac{1}{m} \sum_{x \in
            \Z_m} \mathbbm{1}_B(x) e^{-\frac{2\pi i r x}{m}}.
        \]
        Computing $R$ directly using the Fast Fourier Transform takes time $O(m \log m)$.
    \end{proof}

\subsection{ Finding a GAP in a Bohr Set }

    The structured nature of the Bohr set is instrumental in constructing a generalized arithmetic progression: in fact, we can show that every Bohr set contains a large GAP. In order to prove this, we need to introduce definitions from the geometry of numbers.

    \begin{definition}[Successive Minima and Directional Basis]
        Let $\Lambda \subseteq \mathbb{R}^d$ be a lattice and $T \subseteq \mathbb{R}^d$ be a centrally symmetric convex body.
    
        For $i \in [d]$, the \emph{$i$th successive minimum} $\lambda_i$ of $T$
        with respect to $\Lambda$ is the minimum value such that $\Lambda \cap
        \lambda_i \cdot T$ contains $i$ linearly independent lattice vectors.

        A \emph{directional basis} of $T$ with respect to $\Lambda$ is a basis $\{b_1, b_2, \dots, b_d\}$ of $\mathbb{R}^d$ such that for each $i \in [d]$, $b_i \in \lambda_i T$.
    \end{definition}

    In visual terms, we can imagine constructing a directional basis by gradually scaling the convex body $T$ outward from the origin. Every time $T$ engulfs a new lattice vector $v$, we add $v$ to our directional basis if and only if $v$ is linearly independent from the current set of basis vectors.

    \begin{lemma}[Constructing a Large GAP in a Bohr Set, c.f. \cite{zhao2022graph} Theorem 7.10.1]  
        \label{lem:GAPinBohr-constructive}
        Let $m$ be a prime. Given a set $R \subseteq \Z_m$ of size $|R| = d$ and $\epsilon \in (0,1)$, we can compute a proper GAP $P \subseteq Bohr(R, \epsilon)$ with dimension at most $d$ and volume at least $(\epsilon / d)^d m$ in time $\tilde{O}_d(m)$.
    \end{lemma}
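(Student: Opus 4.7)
The plan is to make Zhao's existential proof of this result (Theorem 7.10.1 in \cite{zhao2022graph}) algorithmic by constructing the associated lattice explicitly, finding a directional basis via a near-linear sort of $\Z_m$, and then reading off the GAP from this basis.

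Write $R = \{r_1, \ldots, r_d\}$ and $v = (r_1/m, \ldots, r_d/m) \in \R^d$, and form the lattice $\Lambda = \Z^d + \Z v$. Since $m$ is prime and at least one $r_i \not\equiv 0 \pmod m$ (otherwise $\mathrm{Bohr}(R,\epsilon) = \Z_m$ and the lemma is trivial), the coset $v + \Z^d$ has order exactly $m$ in $\Lambda/\Z^d$, so $\det(\Lambda) = 1/m$. Taking $T = [-\epsilon, \epsilon]^d$, Minkowski's second theorem yields successive minima $\lambda_1 \leq \cdots \leq \lambda_d$ with $\prod \lambda_i \leq 2^d \det(\Lambda)/\mathrm{vol}(T) = 1/(\epsilon^d m)$. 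Each coset of $\Z^d$ in $\Lambda$ corresponds to a unique $y \in \Z_m$, and the $T$-norm of the shortest representative of $yv + \Z^d$ is precisely $\mu(y) \coloneqq \max_j \norm{r_j y/m}_{\R/\Z}/\epsilon$; in particular, the lattice points of $\Lambda$ inside $\lambda T$ are exactly the coset representatives of $\mathrm{Bohr}(R, \lambda \epsilon)$.

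For the algorithm, I would compute $\mu(y)$ for every $y \in \Z_m$ in $O(md)$ total time, sort $\Z_m$ by $\mu$ in $O(m \log m)$ time, and sweep through the sorted list, greedily accepting into a partial basis any coset representative linearly independent (over $\R$) from those already accepted. An induction on the successive minima shows that the $i$th accepted vector has $T$-norm exactly $\lambda_i$, producing a directional basis $b_1, \ldots, b_d$ with integer representatives $y_1, \ldots, y_d \in \Z_m$ in $\tilde{O}_d(m)$ time (each linear-independence check costs $\mathrm{poly}(d)$). Setting $L_i = \lfloor 1/(d\lambda_i) \rfloor + 1$ and dropping any coordinate with $\lambda_i > 1/d$, I output
\[
    P = \{\ell_1 y_1 + \cdots + \ell_d y_d \bmod m \; : \; 0 \leq \ell_i < L_i\}.
\]
Containment $P \subseteq \mathrm{Bohr}(R, \epsilon)$ follows from the triangle inequality on $\norm{\cdot}_{\R/\Z}$: $\norm{r_j \sum \ell_i y_i/m}_{\R/\Z} \leq \sum (L_i - 1) \lambda_i \epsilon \leq \epsilon$. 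Properness: if $\sum c_i y_i \equiv 0 \pmod m$ with $|c_i| < L_i$, then $\sum c_i b_i \in \Z^d$ has each coordinate strictly less than $\epsilon < 1$ in absolute value, hence equals $0$, forcing $c_i = 0$ by linear independence of the $b_i$. For the volume, $\prod L_i \geq \prod 1/(d\lambda_i) = 1/(d^d \prod \lambda_i) \geq (\epsilon/d)^d m$, with a short case analysis handling the situation in which some $\lambda_i > 1/d$ so that the GAP has dimension $d' < d$: in that regime, one bounds $\prod_{i \leq d'} \lambda_i$ by Minkowski together with the lower bound $\lambda_i > 1/d$ on the dropped indices.

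The main obstacle I expect is verifying that the greedy sort-and-sweep procedure genuinely produces a directional basis rather than a merely short one, and handling the edge case in which the first few successive minima are very small while later ones exceed $1/d$, so that the claimed volume bound $(\epsilon/d)^d m$ must be recovered from a lower-dimensional GAP. Both appear manageable but demand the most care in the write-up; an alternative would be to invoke a black-box $\mathrm{poly}(d)$-time lattice basis reduction on $\Lambda$, which yields the same asymptotic runtime.
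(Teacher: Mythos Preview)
Your proposal is correct and follows essentially the same approach as the paper: build the lattice $\Lambda = \Z^d + \Z v$, sort the lattice points by $L_\infty$-norm, extract a directional basis greedily, and read off the GAP exactly as in Zhao's proof. The one noteworthy difference is that the paper enumerates all $2^d m$ lattice points in the cube $[-1,1)^d$ rather than just the $m$ shortest coset representatives; because this larger set already contains $d$ linearly independent integer vectors, the greedy sweep is guaranteed to terminate with a full directional basis, which cleanly sidesteps the edge case you (correctly) flag at the end, at the cost of a harmless $2^d = O_d(1)$ factor.
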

    \begin{proof}
        Let $R = \{r_1, r_2, \dots, r_d\}$ be a subset of $\Z_m$ (recall
        that $m$ is a prime). We can directly compute the vector $v = (\frac{r_1}{m}, \dots, \frac{r_d}{m}) \in \mathbb{R}^d$ to define the lattice
        \[
        \Lambda = \Z^d + \Z v \subseteq \mathbb{R}^d.
        \]
        Note that the lattice vectors are not necessarily integral, and we have not yet computed a lattice basis; letting $e_i$ denote the standard basis vector in dimension $i$, the set $\{e_1, e_2, \dots, e_d, v\}$ spans the lattice but is not linearly independent.

        Let $r'$ be any nonzero element of $R$. Since $\Z_m$ is a cyclic group of prime order, $r'$ generates $\Z_m$. Thus, because one component of $v$ is $r' / m$, the translations of the integer lattice $\{\Z^d + \gamma v\}_{\gamma \in [m]}$ are all disjoint. Since
        \[
            \Lambda = \Z^d + \Z v \subseteq \bigcup_{\gamma \in [m]} \Z^d + \gamma v,
        \]
        we have that $\Lambda$ is the disjoint union of $m$ translates of the integer lattice. This implies that there are exactly $m$ lattice points of $\Lambda$ within each translation of the unit cube, and, equivalently, that $\det(\Lambda) = 1/m$.
        
        As a result, we can enumerate the set
        \[
            C \coloneqq \{ \norm{ \gamma r / m }_{\R \setminus \Z} \; : \; \gamma \in [m]\} - \{0,1\}^d,
        \]
        the set of all $2^d \cdot m$ lattice points in the cube $\Lambda \cap [-1, 1)^d$, in time $O(2^d \cdot m) = O_d(m)$.

        Next, we sort the set $C$ according to the $L_\infty$ metric, which
        takes time $\tilde{O}_d(m)$. This coincides with our definition of the
        successive minima of a cube centered on the origin with respect to
        $\Lambda$: if $\lambda_i$ is the $i$th successive minima of $[-\epsilon,
        \epsilon]^d$ with respect to $\Lambda$, then the $i$th directional basis
        vector satisfies $\norm{b_i}_\infty \leq \lambda_i [-\epsilon, \epsilon]^d$.
        
        Construct the successive minima $\lambda_1, \dots, \lambda_d$ and the directional basis $b_1, \dots, b_d$ of $[-\epsilon, \epsilon]^d$ with respect to $\Lambda$ by greedily adding independent lattice vectors to our basis from short to long according to the $L_\infty$ metric. Checking whether each subsequent lattice vector is independent from the previous set takes time $O_d(1)$ using Gaussian elimination. Because $[-1, 1)^d$ contains $d$ linearly independent lattice vectors (consider the standard basis), our directional basis is guaranteed to be contained in $\Lambda \cap [-1, 1)^d = C$.

        To complete the construction of the GAP in \cite[Theorem 7.10.1]{zhao2022graph}, we observe that the proper GAP $P$ is defined explicitly in terms of the directional basis of $[-\epsilon, \epsilon]^d$ with respect to $\Lambda$ that we have just constructed. Specifically, we have
        \[
            P = \{ \ell_1 x_1 + \cdots + \ell_d x_d \; | \; \forall i \in [d], \ell_i \in [L_i] \},
        \]
        where each $x_i$ is the unique element in $[0:m-1]$ such that $b_i \in
        x_i v + \Z^d$ and $L_i \coloneqq \lceil 1/(\lambda_i d)\rceil$. Each $L_i$ can be computed directly, and each $x_i$ can be computed in time $O(m)$.
    \end{proof}

\subsection{Ruzsa's Covering Lemma}
\label{subsec:ruzsa-covering}

    Ruzsa's covering lemma states that if the sumset $|Y+Z|$ is small relative to $|Y|$, it is possible to cover $Z$ with a small number of translates of $Y - Y$. A rough intuition for this result is that it is a statement about the ``conservation of additive structure'': if $Y$ and $Z$ have ``common additive structure'' (captured by the condition that $|Y+Z| \leq \DC |Y|$), then $Z$ is ``similar'' to $Y - Y$ (in the sense that $Z$ is covered by few translates of $Y - Y$).
    
    The fact that Ruzsa's covering lemma can be made efficiently constructive
    was previously observed by Abboud, Bringmann, and Fischer~\cite{abboud2022stronger}:

    \begin{lemma}[Constructive Ruzsa's Covering Lemma, \cite{abboud2022stronger} Lemma 4.7]
        \label{lem:RCL-constructive}
        Let $Y, Z$ be nonempty finite subsets of an abelian group. If $|Y+Z|
        \leq \DC |Y|$, then there exists a subset $X \subseteq Z$ with $|X| \leq
        \DC$ and $Z \subseteq Y - Y + X$. Moreover, $X$ can be computed in time
        $\tilde{O}(\frac{|Y - Y + Z|\cdot |Y+Z|}{|Y|}) = \tilde{O}(\DC|Y - Y + Z|)$.
    \end{lemma}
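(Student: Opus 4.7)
My plan is to prove the lemma by implementing the standard greedy argument underlying Ruzsa's covering lemma, augmented with hash tables and a sparse convolution subroutine to achieve near-linear running time. I would maintain an evolving set $X$, initially empty, together with a hash table $S$ representing the elements of $Z$ not yet covered by $X + (Y - Y)$; initially $S = Z$. The main loop repeats the following step until $S$ is empty: pick any $z \in S$, add $z$ to $X$, and then iterate through the shifted set $z + (Y - Y)$, removing each of its elements from $S$ when present. The output is the final set $X$.

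Correctness follows because each element of $Z$ gets removed from $S$ exactly when it is witnessed to lie in $x_i + (Y - Y)$ for some $x_i \in X$, so at termination $Z \subseteq X + Y - Y$. For the size bound, the key observation is that whenever a new $z$ is added to $X$, it is still in $S$, hence $z \notin x_j + (Y - Y)$ for every previously chosen $x_j$; equivalently, the translates $\{x_i + Y\}_i$ are pairwise disjoint. All of these disjoint translates are contained in $Y + Z$, so $|X| \cdot |Y| \leq |Y + Z| \leq \DC |Y|$, giving $|X| \leq \DC$.

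For the runtime, I would precompute $Y - Y$ explicitly using a sparse convolution algorithm in time $\tilde{O}(|Y - Y|)$, and store $Z$ in a hash table. Each iteration of the loop then costs $\tilde{O}(|Y - Y|)$ for generating $z + (Y - Y)$ and performing the corresponding hash-table lookups and deletions. Since the loop runs at most $\DC$ times, the total cost is $\tilde{O}(\DC \cdot |Y - Y|)$, which is bounded by $\tilde{O}(\DC \cdot |Y - Y + Z|)$ as claimed, since $|Y - Y| \leq |Y - Y + Z|$ whenever $Z$ is nonempty.

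The main obstacle I anticipate is ensuring that the per-iteration deletion step runs in near-linear rather than quadratic aggregate time. Combining a hash-based representation of $S$ with a single up-front computation of $Y - Y$, rather than regenerating it on each iteration, is what makes the amortization work. The greedy termination argument and the disjoint-translates bound on $|X|$ are the classical ingredients of the existential Ruzsa covering lemma, so no additional combinatorial work is needed beyond the algorithmic bookkeeping outlined above.
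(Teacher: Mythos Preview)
Your proposal is correct and follows the standard greedy argument for Ruzsa's covering lemma, which is precisely the approach of Abboud, Bringmann, and Fischer; note that the present paper does not supply its own proof of this lemma but simply cites their result. One minor caveat: the sparse-convolution and hash-table primitives you invoke are stated for integers (or at least finitely representable group elements in the word-RAM model), so your argument implicitly assumes the ambient abelian group supports these operations, which is indeed the case in the paper's sole application of the lemma to integer sets.
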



\subsection{ Proof of \texorpdfstring{\Cref{thm:Freiman-constructive}}{} }

\begin{proof}
    Combining the ingredients from the previous subsections allows us to prove
    \Cref{thm:Freiman-constructive}. Let $A$ be a finite
    integer set with $|A+A| \leq \DC |A| = \DC n$. By \Cref{lem:plunnecke}, $|8A
    - 8A| \leq \DC^{16} |A|$. 

    Choose a prime $m = O_\DC(n)$ satisfying $4 \DC^{16} n < m < 16 \DC^{16} n$,
    which can be accomplished in time $O_\DC(\polylog(n))$ with high probability by guessing and testing primality \cite{agrawal2004primes}. Then, apply
    \Cref{lem:RML-constructive} with $s = 8$ to compute a set $A' \subseteq A$
    with $|A'| \geq |A| / 8$ and a mapping $\psi$ such that $\psi$ is a Freiman
    8-isomorphism from $A'$ to $B \coloneqq \psi(A') \subseteq \Z_m$ with probability at least $1/2$ in time $\tilde{O}(n)$. We can increase the success probability of this step by repetition: for any integer constant $\gamma > 0$, running the algorithm $\gamma \log(n)$ times lowers the failure probability to $n^{-\gamma}$.

    Apply \Cref{lem:BL-constructive} to $B$ with 
    \[
    \alpha = \frac{|B|}{m} = \frac{|A'|}{m} \geq \frac{|A|}{8m} = 1 / O_\DC(1).
    \]
    This gives us $R \subseteq \Z_m$ of size $1 / \alpha^2 =
    O_\DC(1)$ such that $Bohr(R, 1/4) \subseteq 2B - 2B$ in time $\tilde{O}(m) =
    \tilde{O}(n)$. Then, apply \Cref{lem:GAPinBohr-constructive} to $R$ to
    compute the proper GAP $P \subset Bohr(R, \epsilon) \subseteq 2B-2B$ with
    dimension at most $|R| = O_\DC(1)$ and volume at least $(1/4|R|)^{|R|} m = m
    / O_\DC(1)$.

    Following the proof of Theorem 7.11.1 (Freiman's Theorem) in
    \cite{zhao2022graph}, we have that $Q \coloneqq \psi^{-1}(P)$ is a GAP of the same dimension and volume satisfying 
    \begin{equation}
        \label{eq:QA1}
        Q \subseteq 2A' - 2A' \subseteq 2A - 2A.
    \end{equation} 
    Thus
    \[
        |Q + A| \leq |2A - 2A + A| = |3A - 2A| \leq \DC^5 |A| = O_\DC(1) \cdot |Q|,
    \]
    where the second inequality uses \Cref{lem:plunnecke}. Using the fact that $|Q+A| = O_\DC(1) \cdot |Q|$, we apply \Cref{lem:RCL-constructive} to $Q$ and $A$ to get a set $X$ of size $|X| = O_\DC(1)$ satisfying $A \subseteq Q - Q + X$ in time 
    \[
        \tilde{O}(\DC|Q - Q + A|) = \tilde{O}(\DC|2A - 2A - (2A - 2A) + A|) =
        \tilde{O}(\DC|5A - 4A|) = \tilde{O}(\DC^9|A|),
    \]
    where the final equality uses \Cref{lem:plunnecke}. We conclude with the
    observation that $Q - Q + X$ is a GAP of dimension $O_\DC(1)$ and volume $O_\DC(1) \cdot |Q| = O_\DC(1) \cdot |A|$, containing $A$. Note that each step in the proof takes $\tilde{O}_\DC(n)$ time.
\end{proof}

\subsection{Proof of \texorpdfstring{\Cref{obs:bigL-bound}}{}}
\label{apxsec:bigL-bound}

\begin{proof}
    Suppose $L_i > n^{1/d(\DC)}$ for some $i \in [d(\DC)]$, and let $\alpha_i
    \coloneqq \alpha_i(\DC)$ be the solution to $L_i \leq n^{\alpha_i/d(\DC)}$. (Note that $\alpha_i = O_\DC(1)$, as $|A| = O_\DC(n)$.) 
        
    Let $\hat{\alpha}_i \coloneqq \alpha_i - \lfloor \alpha_i \rfloor$ denote the decimal part of $\alpha_i$, and observe that
    \begin{align}
        &\{y_i\ell_i : \ell_i \in [L_i]\} \subseteq \\
        &\{y_{i,1}\ell_{i, 1} + \dots  +y_{i,\lfloor \alpha_i \rfloor}\ell_{i, \lfloor \alpha_i \rfloor} + y_{i,\lceil \alpha_i \rceil}\ell_{i, \lceil \alpha_i \rceil} \; : \; \forall j \in [\lfloor \alpha_i \rfloor], \ell_{i, j} \in [\lceil L_i^{1/d(\DC)} \rceil], \ell_{i, \lceil \alpha_i \rceil} \in [n^{\hat{\alpha}_i/d(\DC)}]\},
    \end{align}
    where $y_{i, j} = y_i \lceil L_i^{1/d(\DC)} \rceil^{j-1}$ for $j \in [\lceil \alpha_i \rceil]$; that is, we can replace one dimension of our arithmetic progression with $\lceil\alpha_i\rceil$ new dimensions, each bounded by $n^{1/d(\DC)}$. As
    \[
        vol(P) = \prod_{i \in [d]} L_i = O_\DC(n)
    \]
    by \Cref{thm:Freiman-constructive}, performing this operation for each $L_i > n^{1/d(\DC)}$ results in a new gap $P'$ with dimension $d'(\DC) \leq 2d(\DC)$ and volume $O_\DC(n)$. 
\end{proof}

\section{ ILP Manipulations }
\label{apx:ilp-manipulation}

This appendix contains manipulations that allow us to assume certain properties of ILPs without loss of generality.

\subsection{ Non-negativity for BILP Feasibility }
\label{apxsubsec:BILP-non-negative}

\begin{observation}\label{obs:non-negative}
    Let $\mathcal{I}$ be an instance of ILP feasibility with binary variables given by the constraint matrix $A \in \Z^{m \times n}$ and the target vector $b \in \Z^m$. Without loss of generality, we can assume that entries of $A$ are non-negative and that every solution has fixed support $q$ for some $q = \Theta(n)$.
\end{observation}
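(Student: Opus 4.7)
The plan is a simple two-step reduction that first forces a fixed support of size $\Theta(n)$ by doubling the variable count, and then translates the matrix by a uniform shift to eliminate negative entries. The trick is that a uniform shift is benign only on vectors of fixed support, so the support must be pinned first.

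For the first step, introduce $n$ complement binary variables $y_1, \ldots, y_n$ together with the constraints $x_i + y_i = 1$ for each $i \in [n]$. This yields a new instance on $2n$ variables with $m+n$ constraints, described by
\[
A' = \begin{pmatrix} A & 0_{m \times n} \\ I_n & I_n \end{pmatrix}, \qquad b' = \begin{pmatrix} b \\ \mathbf{1}_n \end{pmatrix}.
\]
Any $z = (x,y) \in \{0,1\}^{2n}$ satisfying $A'z = b'$ automatically has $y = \mathbf{1}_n - x$ and $Ax = b$, so solutions of the original instance correspond bijectively to solutions of the extended one (both directions recoverable in linear time). In particular, every solution $z$ to the extended instance has $|\supp(z)| = n$, providing the fixed support $q = n = \Theta(n)$.

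For the second step, set $D \coloneqq \norm{A'}_\infty$ and replace $A'$ with
\[
A'' \coloneqq A' + D \cdot J_{(m+n) \times 2n}, \qquad b'' \coloneqq b' + D n \cdot \mathbf{1}_{m+n}.
\]
By choice of $D$, every entry of $A''$ is non-negative. For any $z \in \{0,1\}^{2n}$ with $|\supp(z)| = n$ one has $A'' z = A' z + D (\mathbf{1}^\top z) \mathbf{1}_{m+n} = A' z + D n \mathbf{1}_{m+n}$, so $A'' z = b''$ iff $A' z = b'$. Since \emph{all} solutions to $A' z = b'$ have support exactly $n$ by Step 1, the instances $(A', b')$ and $(A'', b'')$ are fully equivalent. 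Combining the two steps produces an ILP with a non-negative constraint matrix whose every feasible solution has support $q = n = \Theta(n)$, proving the observation.

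There is no real obstacle here — the argument is a packaging step and both transformations run in polynomial time with polynomial blow-up. The only point that needs care is the order: the shift in Step 2 absorbs a term $D(\mathbf{1}^\top z)\mathbf{1}$ into $b''$, which is only well-defined as a constant once $\mathbf{1}^\top z$ is known to be fixed. This is precisely what Step 1 guarantees via the complement-variable trick.
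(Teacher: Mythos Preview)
Your two-step reduction is a valid strategy, but Step 2 has a gap: you only establish that vectors $z$ with $|\supp(z)| = n$ satisfy $A''z = b''$ iff $A'z = b'$, and from this you conclude both ``full equivalence'' and that every feasible $z$ for $(A'',b'')$ has support $n$. Neither follows from what you wrote --- you must separately rule out solutions of $(A'',b'')$ with support $k \neq n$. This is fixable: any shifted complement row forces $x_i + y_i + Dk = 1 + Dn$, i.e.\ $x_i + y_i = 1 + D(n-k)$; since $D = \norm{A'}_\infty \geq 1$ (the $I_n$ blocks already guarantee this) and $x_i + y_i \in \{0,1,2\}$, a short case check gives $k = n$. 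So the argument can be completed, but the step should be made explicit.

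Modulo that, your approach differs from the paper's in a way that matters downstream. The paper appends a \emph{single} row of all $1$'s with right-hand side $n$ (together with $n$ all-$\Delta$ dummy columns and the uniform shift), so the number of constraints grows only from $m$ to $m+1$; your complement-variable trick adds $n$ new constraints, taking $m$ to $m+n$. For the bare observation this is harmless, but the observation is used in the proof of \Cref{lem:BILP-to-HBILP}, where the target runtime is $\Delta^{O(m)} \cdot O_m(\poly(n))$ --- blowing $m$ up to $m+n$ would make that bound meaningless. So your construction proves the statement as written but would not serve its intended role in the paper; the single counting row is what makes the reduction parameter-preserving.
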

\begin{proof}

Construct a new constraint matrix $\tilde{A}$ as follows: Recall that $J_{m\times n}$ denotes the $m$ by $n$ matrix of 1's, and add $\Delta \cdot J_{m\times n}$ to the matrix $A$. Then we append an additional $n$ columns to $A$, where each column consists only of $\Delta$ entries only. Finally, we append a row of 1's. 

To create $\tilde{b}$, add $n\Delta \cdot J_{m \times 1}$ to $b$ and append a single entry with the value $n$. 

\[
    \tilde{A} \coloneqq 
    \begin{pmatrix}
        \makebox(2.5cm,1.5cm){ $A + \Delta \cdot J_{m \times n}$}
        & \rvline & \makebox(2.5cm,1.5cm){$\Delta \cdot J_{m \times n}$} \\
        \hline
        \makebox(1.5cm,0.5cm){ $J_{1 \times n}$} & \rvline &
        \makebox(1.5cm,0.5cm){ $J_{1 \times n}$}
    \end{pmatrix}
    \smallskip
    \,\,\,
    \tilde{b} \coloneqq
    \begin{pmatrix}
        \makebox(2.3cm,1.5cm){ $b + n\Delta \cdot J_{m \times 1}$} \\
        \hline
        \makebox(1cm,0.5cm){$n$} \\
    \end{pmatrix}
\]

Observe that every entry in $\tilde{A}$ is positive and that the maximum entry in $\tilde{A}$ is at most $2\Delta = O(\Delta)$. For correctness, note that the last row ensures that any solution to $\tilde{A} x = \tilde{b}$ with $x \in \{0,1\}^{2n}$ has support exactly $n$. This implies that the additional $\Delta$ factors added to every component in each of the first $m$ rows add a total of $n\Delta$ to each component of $Ax$. Thus $\tilde{A}x = \tilde{b}$ if and only if $Ax = b$.
\end{proof}

\subsection{ Non-negativity for HBILP Feasibility }
\label{apxsubsec:HBILP-non-negative}

\begin{observation}\label{obs:HBILP-non-negative}
    Let $\mathcal{I}$ be an instance of HBILP feasibility given by the constraint matrix $A \in \Z^{m \times n}$, the step vector $s \in \Z^m$, and the target $t \in \Z$. Without loss of generality, we can assume that entries of $A$ are non-negative and that every solution $x$ has fixed support size $q$ for some $q = \Theta(n)$.
\end{observation}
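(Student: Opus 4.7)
The plan is to mirror the construction of \Cref{obs:non-negative}, adapting it from the matrix equation $Ax = b$ to the scalar equation $\langle Ax, s \rangle = t$. The key adjustment is that there is only one scalar target, so I cannot simply append a new target coordinate to fix the support; instead, I would encode a support-fixing row by extending the step vector $s$ with a single, sufficiently large weight $M$.

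Specifically, I would let $\tilde A$ be the $(m+1)\times 2n$ block matrix whose top-left block is $A + \Delta\, J_{m\times n}$, top-right block is $\Delta\, J_{m\times n}$, and bottom row is $J_{1\times 2n}$; extend $s$ to $\tilde s \coloneqq (s_1,\dots,s_m, M)^T$; and set $\tilde t \coloneqq t + n\Delta \sigma + M n$, where $\sigma \coloneqq \sum_{i=1}^m s_i$. A brief calculation then shows that for any $x = (x^{(1)}, x^{(2)}) \in \{0,1\}^{2n}$ with total support $q$,
\[
    \langle \tilde A x, \tilde s\rangle \;=\; \langle A x^{(1)}, s\rangle + (\Delta\sigma + M)\, q.
\]

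Choosing $M$ large enough that the $Mq$ contribution dominates the others would force every solution of the new instance to satisfy $q = n$, after which $\langle \tilde A x, \tilde s\rangle = \tilde t$ collapses to $\langle A x^{(1)}, s\rangle = t$. In the converse direction, any original solution $x^{(1)}$ with support $q_0 \le n$ can be padded to total support exactly $n$ by activating any $n - q_0$ of the $n$ dummy columns in $x^{(2)}$. Every entry of $\tilde A$ is non-negative and at most $2\Delta$, so the parameter $\Delta$ is preserved up to a constant factor; the number of variables doubles to $n' = 2n$, and the forced support is $q = n = \Theta(n')$.

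The only point requiring real care is the choice of $M$: it must exceed every possible value of $|\langle A x^{(1)}, s\rangle| + \Delta |\sigma|\cdot n$ so that a deficit in support cannot be compensated for by perturbations in the other rows, yet it should stay polynomial in the input to avoid inflating $\norm{\tilde s}_\infty$. Any $M = \poly(n, m, \Delta, \norm{s}_\infty, |t|)$ suffices, for example $M > nm\Delta\norm{s}_\infty + |t|$. Beyond verifying this separation bound, the argument is a direct calculation, so I do not anticipate a substantial obstacle.
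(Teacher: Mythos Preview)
Your proposal is correct and matches the paper's proof essentially line for line: the same $(m+1)\times 2n$ block matrix $\tilde A$, the same extension of $s$ by a large weight $M$ on the all-ones row, and the same target shift, with the identical two-direction verification. The only cosmetic difference is the exact constant chosen for $M$; your observation that one must also dominate $|t|$ (or, equivalently, first discard trivially infeasible targets with $|t|>nm\Delta\norm{s}_\infty$) is in fact a small point the paper glosses over.
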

\begin{proof}
Given $A, s, t$, we create a new HBILP feasibility instance $\tilde{A}, \tilde{s}, \tilde{t}$ as follows. Recall that $J_{m \times n}$ denotes the $m$ by $n$ matrix of 1's and add $\Delta \cdot J_{m \times n}$ to $A$. Then append the matrix $\Delta \cdot J_{m\times n}$ to the right-hand side $A$. Finally, add a row of 1's to the bottom of the matrix. 

Define 
\[
    M \coloneqq \norm{s}_\infty \cdot 3n\Delta m,
\]
and note that, by construction, we have
\begin{equation}
    \label{eq:M-lb-HBILP}
    \langle (A + 2 \Delta J_{m \times n})x, s \rangle \leq 3n\Delta \langle J_{m \times 1}, s \rangle < M.
\end{equation}
Create $\tilde{s} \in \Z^{m+1}$ by appending $M$ to $s$, and set $\tilde{t} = t
+ n\Delta\norm{s}_1 + nM$. Written as block matrices, we have:

\[
    \tilde{A} \coloneqq 
    \begin{pmatrix}
        \makebox(2.5cm,1.5cm){ $A + \Delta \cdot J_{m \times n}$}
        & \rvline & \makebox(2.5cm,1.5cm){$\Delta \cdot J_{m \times n}$} \\
        \hline
        \makebox(1.5cm,0.5cm){ $J_{1 \times n}$} & \rvline &
        \makebox(1.5cm,0.5cm){ $J_{1 \times n}$}
    \end{pmatrix}
    \smallskip
    \,\,\,
    \tilde{s} \coloneqq
    \begin{pmatrix}
        \makebox(2.3cm,1.5cm){ $s$ } \\
        \hline
        \makebox(1cm,0.5cm){$M$} \\
    \end{pmatrix}
\]

Observe that every entry in $\tilde{A}$ is positive and that the maximum entry in $\tilde{A}$ is at most $2\Delta = O(\Delta)$. Because the top $m$ rows of $\tilde{A}$ contribute a total value less than $M$ to the dot product $\langle \tilde{A}x, \tilde{s} \rangle$ by \eqref{eq:M-lb-HBILP}, any solution to $\tilde{A}, \tilde{s}, \tilde{t}$ must have support exactly $n$ so that the resulting dot product contains the term $nM$.

It remains to prove correctness:
\begin{claim}
    A vector $y \in \{0, 1\}^{2n}$ satisfies $\langle \tilde{A}y, \tilde{s} \rangle = \tilde{t}$ if and only if $\supp(y) = n$ and the first half of $y$, the vector $y' = (y_1, y_2, \dots, y_n)$, satisfies $\langle Ay', s \rangle = t$.
\end{claim}
\begin{claimproof}
    Suppose some vector $y' \in \{0, 1\}^n$ satisfies $\langle Ay', s \rangle =
    t$. Then the vector $y$ created by adding an arbitrary $n$-bit string with support $n - \supp(y')$ satisfies
    \[
        \langle \tilde{A}x', \tilde{s} \rangle = \tilde{t}
    \]
    and is a valid solution to $\tilde{A}, \tilde{s}, \tilde{t}$.

    Now suppose some vector $y \in \{0, 1\}^{2n}$ satisfies $\langle \tilde{A}y, \tilde{s} \rangle = \tilde{t}$. As previously noted, we must have $\supp(y) = n$ to create the $nM$ term in the product $\langle \tilde{A}y, \tilde{s} \rangle$. The additional $\Delta$ factors added to every component in each of the first $m$ rows of $\tilde{A}$ create the $n \Delta ||s||_1$ term in the product. If we remove these two terms, the remainder of the equation $\langle \tilde{A}y, \tilde{s} \rangle = \tilde{t}$ is $\langle A(y'_1, y'_2, \dots, y'_n), s \rangle = t$.
\end{claimproof}

    This concludes the proof of \Cref{obs:HBILP-non-negative}.
\end{proof}

\subsection{ Reduction of Binary ILP Feasibility to HBILP Feasibility }
\label{apxsubsec:BILP-to-HBILP}

\begin{proof}[Proof of \Cref{lem:BILP-to-HBILP}]
    Fix an instance $A \in \Z^{m \times n}, b \in \Z^n$ of Binary ILP
    Feasibility with $\Delta \coloneqq \norm{A}_\infty$. By \Cref{obs:non-negative}, we can assume without loss of generality that the entries of $A$ are non-negative.

    Define $q \coloneqq q(n, \Delta) = n\Delta + 1$ and create a new instance $A, s, t'$ of HBILP feasibility by setting
    \begin{align*}
        s &\coloneqq (q^0, q^1, \dots, q^{m-1}) \text{ and } \\
        t' &\coloneqq \langle b, s \rangle = b_1 \cdot q^0 + b_2 \cdot q^1 + \dots + b_m \cdot q^{m-1},
    \end{align*}
    effectively using $t'$ to store $m$ registers of $\log_2(q) =
    \log_2(n\Delta+1)$ bits each. 
    
    We claim $x \in \{0,1\}^n$ solves $A, b$ if and only if it solves $A, s, t'$. If $Ax = b$, $\langle Ax, s \rangle = t'$ follows immediately from the definition of $t'$. 
    
    Now suppose $\langle Ax, s \rangle = t'$. Because $q > A[i, \cdot]x$ for any row $i \in [m]$ by construction, the only way to achieve $t'$ is if $A[i, \cdot]x = b_i$ for each $i \in [m]$.
\end{proof}

\section{Proof of~\texorpdfstring{\Cref{lem:splitter}}{}}\label{apx:splitter}

The proof of~\Cref{lem:splitter} reformulates the well-known concepts of the \emph{perfect hash family} and the \emph{splitter}. 
\begin{definition}[Splitter] \label{def:splitter}
    An $(n,k,\ell)$-splitter $\mathcal{F}$ is a family
    of functions from $[n]$ to $[\ell]$ such that for every set $S \subseteq [n]$
        of size $k$, there exists $f \in \mathcal{F}$ such that for every $1 \le
    j,j' \le \ell$, the values $|f^{-1}(j) \cap S|$ and $|f^{-1}(j') \cap S|$
    differ by at most $1$.
\end{definition}

In other words, for every $S \subseteq [n]$ of size $k$, some $f \in \mathcal{F}$ partitions $[n]$ into $\ell$ subsets in a way that splits $S$ as evenly as possible.
The special case of an $(n,k,k)$-splitter is called $(n,k)$-perfect hash family.
We use the following construction of an $(n,k)$-perfect hash family due to Naor et
al.~\cite{naor1995splitters}.

\begin{theorem}[\cite{naor1995splitters}]\label{thm:perfect-hash}
    For any $n,k \ge 1$, it is possible to construct an $(n,k)$-perfect hash family of
    size $e^k k^{O(\log k)} \log n$ in time $e^k k^{O(\log(k)} n \log n$.
\end{theorem}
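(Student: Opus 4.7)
The plan is to follow the classical splitter-based strategy of Naor--Schulman--Srinivasan: reduce the construction of an $(n,k)$-perfect hash family on the large universe $[n]$ to one on a universe of size polynomial in $k$, and handle the small case by a derandomized color-coding argument.

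First, I would build an $(n, k, k^2)$-splitter $\mathcal{S}$ of size $k^{O(1)} \log n$. Standard explicit constructions use a $k$-wise independent family of hash functions from $[n]$ into $[k^2]$ given by low-degree polynomials over a finite field of size $\Theta(n)$; a careful accounting (or derandomized concentration bound) shows that a family of size $k^{O(1)} \log n$ suffices, and it is constructible in time $k^{O(1)} n \log n$. The property I need is weaker than full perfection: for any $k$-subset $S \subseteq [n]$, \emph{some} $s \in \mathcal{S}$ sends $S$ injectively into $[k^2]$. This follows immediately from Definition~\ref{def:splitter}: since the preimage sizes of $S$ must differ by at most one and $|S|=k<k^2$, every part of $[k^2]$ receives at most one element of $S$.

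Second, I would build a $(k^2, k)$-perfect hash family $\mathcal{H}$ on the small universe $[k^2]$ of size $e^k k^{O(\log k)}$. The probabilistic starting point is that a uniformly random function $[k^2] \to [k]$ is injective on a fixed $k$-set with probability $k!/k^k = \Theta(\sqrt{k}\, e^{-k})$; a union bound over the $\binom{k^2}{k} \le k^{O(k)}$ target subsets therefore calls for roughly $e^k \cdot k \log k$ samples. Derandomizing by restricting to an $O(\log k)$-wise independent family over $[k^2]$ adds only a $k^{O(\log k)}$ factor, and since the ground universe has size $k^2$, the entire construction of $\mathcal{H}$ runs in time depending only on $k$.

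Third, compose the two ingredients: set $\mathcal{F} = \{\, h \circ s : s \in \mathcal{S},\; h \in \mathcal{H}\,\}$. For any $k$-subset $S \subseteq [n]$, pick $s \in \mathcal{S}$ with $s|_S$ injective, then pick $h \in \mathcal{H}$ that is perfect on the $k$-element image $s(S) \subseteq [k^2]$; the composition $h \circ s$ is injective on $S$. Thus $\mathcal{F}$ is an $(n,k)$-perfect hash family of size $|\mathcal{S}| \cdot |\mathcal{H}| = e^k k^{O(\log k)} \log n$. The construction time is dominated by enumerating every $(s,h)$ and evaluating $s$ on all of $[n]$, yielding $e^k k^{O(\log k)} n \log n$.

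The main obstacle is achieving the precise $e^k k^{O(\log k)}$ bound on the small-universe family $\mathcal{H}$: the $e^k$ factor falls out of the $k!/k^k$ probability essentially for free, but the $k^{O(\log k)}$ overhead requires a careful derandomization. A naive use of fully $k$-wise independent hashes is too expensive; instead, one uses $O(\log k)$-wise independence together with a tail bound, or recurses by applying another splitter to shrink $[k^2]$ down to a universe of size $O(k \log k)$ on which brute enumeration is already affordable. Controlling this recursion without accumulating extra factors is the delicate step of the argument.
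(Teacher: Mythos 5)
First, note that the paper does not prove this statement at all: Theorem~\ref{thm:perfect-hash} is imported verbatim from Naor, Schulman, and Srinivasan~\cite{naor1995splitters} and is used as a black box in \Cref{apx:splitter}, so there is no in-paper proof to compare yours against. Judged as a reconstruction of the NSS argument, your outer architecture is the right one: reduce $[n]$ to a universe of size $k^2$ with a polynomial-size splitter family that is injective on every $k$-set, build a $(k^2,k)$-perfect hash family of size $e^k k^{O(\log k)}$, and compose. Steps one and three are sound as written.

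The genuine gap is in your second step. The claim that restricting to an $O(\log k)$-wise independent family of functions $[k^2]\to[k]$ ``adds only a $k^{O(\log k)}$ factor'' while preserving the $k!/k^k=\Theta(\sqrt{k}\,e^{-k})$ injectivity probability is not justified, and it is not a routine fact: injectivity on a fixed $k$-set is an event depending on all $k$ coordinates, and independence of order $O(\log k)\ll k$ gives no direct control over its probability. This is exactly the step where the advertised $e^k k^{O(\log k)}$ bound has to be earned. The actual NSS route is the alternative you mention only in passing: apply a further $(k^2,k,\ell)$-splitter with $\ell\approx k/\log k$ to break the $k$-set into blocks of size $O(\log k)$, construct a perfect hash family on each block by exhaustive search or the method of conditional expectations (feasible because the block size is logarithmic in $k$ and the universe has size $k^{O(1)}$), and take the product family over blocks; the $e^k$ then arises as a product of $e^{O(\log k)}$ per-block factors, while the splitter and the per-block families contribute the $k^{O(\log k)}$ overhead. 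Since you explicitly defer this (``controlling this recursion \ldots is the delicate step''), the proposal is incomplete precisely at the point that yields the stated bound, and the specific mechanism you do commit to (limited independence) would not work as described.
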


Observe that in our case, the $k^{O(\log k)}$ factor is absorbed by the $2^{O(k)}$ factor in the statement of~\Cref{lem:splitter}.  Let $A = \{a_1,\ldots,a_n\}$. For each function $f \in \mathcal{F}$ and each integer
$i \in [k]$ we let $f_A : [k] \to 2^A$ be

\begin{displaymath}
    f_A(i) = \{ a_j \mid f(j) = i \}.
\end{displaymath}

With the perfect hash family $\mathcal{F}$ we construct
the set $\mathcal{P}$ as follows: for every function $f \in \mathcal{F}$ we
simply add the set family $(f_A(1),\ldots,f_A(k))$
to the set $\mathcal{P}$. 
Observe that this set family forms a partition of $A$ because $f$ is a
well-defined function.  \Cref{thm:perfect-hash} provides the claimed guarantees
on the size of $\mathcal{P}$ and the construction time. Finally, let $S =
\{x_1,\ldots,x_k\}$ be an arbitrary subset of $A$. Because $\ell = k$, ~\Cref{def:splitter} guarantees that for some $f \in \mathcal{F}$, $|f^{-1}(j)\cap S| = 1$ for every $j \in
[k]$. This concludes the proof of~\Cref{lem:splitter}.

\end{document}